\newcommand{\PAPER}[1]{}
\algnewcommand{\Inputs}[1]{%
  \State \textbf{Inputs:}
  \Statex \hspace*{\algorithmicindent}\parbox[t]{.8\linewidth}{\raggedright #1}
}
\algnewcommand{\Initialize}[1]{%
  \State \textbf{Initialize:}
  \Statex \hspace*{\algorithmicindent}\parbox[t]{.8\linewidth}{\raggedright #1}
}
\algnewcommand{\TurnOne}[1]{%
  \State \textbf{Timestep 1:}
  \Statex \hspace*{\algorithmicindent}\parbox[t]{.8\linewidth}{\raggedright #1}
}
\newtheorem{fact}{Fact}[section]
\newtheorem{theorem}[fact]{Theorem}
\newtheorem{lemma}[fact]{Lemma}
\newtheorem{corollary}[fact]{Corollary}
\newtheorem{problem}{Problem} 
\newcommand{\ignore}[1]{}
\newcommand{\ceil}[1]{\left\lceil #1 \right\rceil}
\newcommand{\floor}[1]{\left\lfloor #1 \right\rfloor}
\newcommand{\OO}{\tilde{O}}
\newcommand{\polylog}{\,{\operatorname{polylog}}\,}
\DeclareRobustCommand\onedot{\futurelet\@let@token\@onedot}
\def\@onedot{\ifx\@let@token.\else.\null\fi\xspace}
\title{More on Change-Making and Related Problems\thanks{Supported in part by NSF Grant CCF-1814026.
A preliminary version of this paper appeared in
ESA 2020~\cite{ChanH20}.
}}
\author[1]{Timothy M. Chan} 
\author[1]{Qizheng He} 
\affil[1]{Department of Computer Science, University of Illinois at Urbana-Champaign}
\begin{document}
\date{}
\maketitle

\begin{abstract}
Given a set of $n$ integer-valued coin types and a target value $t$, the well-known \emph{change-making} problem asks for the minimum number of coins that sum to $t$, assuming an unlimited number of coins in each type.
In the more general
\emph{all-targets} version of the problem, we want the minimum number of coins summing to $j$, for every $j=0,\ldots,t$.
For example, the textbook dynamic programming algorithms can solve the all-targets problem in $O(nt)$ time.
Recently, Chan and He (SOSA'20) described a number of $O(t\,\textrm{polylog}\,t)$-time algorithms for the original (single-target) version of the change-making problem, but not the all-targets version.

In this paper,
we obtain a number of new results on change-making and related problems:

\begin{itemize}
    \item We present a new algorithm for the all-targets  change-making problem with running time $\tilde{O}(t^{4/3})$, improving a previous $\tilde{O}(t^{3/2})$-time algorithm.
    \item We present a very simple $\tilde{O}(u^2+t)$-time algorithm for the all-targets change-making problem, where $u$ denotes the maximum coin value.  The analysis of the algorithm uses a theorem of Erd\H{o}s and Graham (1972) on the Frobenius problem.  This algorithm can be extended to solve the all-capacities version of the \emph{unbounded knapsack} problem
    (for integer item weights bounded by~$u$).
    \item For the original (single-target) coin changing problem, we describe a simple modification of one of Chan and He's algorithms that runs in $\tilde{O}(u)$ time (instead of $\tilde{O}(t)$).
    \item For the original (single-capacity) unbounded knapsack problem,
    we describe a simple algorithm that runs in $\tilde{O}(nu)$ time, improving previous near-$u^2$-time algorithms.
    \item We also observe how one of our ideas implies a new result on the \emph{minimum word break} problem, an optimization version of
    a string problem studied by Bringmann et al.~(FOCS'17),  generalizing
    change-making (which corresponds to the unary special case).
\end{itemize}
\end{abstract}
%
%

\providecommand{\keywords}[1]{\textbf{\textit{Keywords.}} #1}
\keywords{Coin changing, knapsack, dynamic programming, Frobenius problem,  fine-grained complexity}



\section{Introduction}
In the \emph{change-making} problem (also known as \emph{coin changing}), a set of $n$ positive-integer-valued coin types is given, and the cashier wants to use the minimum number of coins to sum to a target value $t$ exactly, where the number of coins in each type can be used an unlimited number of times. This is a well-known textbook problem, which is weakly NP-hard~\cite{lueker1975two}, and standard solutions using dynamic programming~\cite{Wright1975The} have $O(nt)$ running time.

Change-making is closely related to another textbook problem, \emph{subset sum} (the differences are that in subset sum, each item may be used at most once and there is no objective function to minimize).
A series of work in the last few years \cite{bringmann2017near,koiliaris2017faster,jin2018simple,koiliaris2019faster} have given improved algorithms for subset sum, using convolution (FFT\@).  Very recently, at SOSA'20, Chan and He~\cite{chan2020change} revisited the change-making problem and described
a number of  $O(t\polylog t)$-time algorithms, using FFT; their fastest deterministic and randomized algorithms have $O(t\log t\log\log t)$ and  $O(t\log t)$ running time respectively.

\paragraph{All-targets change-making.}
In this paper, we consider a more general, \emph{all-targets} version of the change-making problem: the aim is to compute,
for each target value $j=0,\ldots,t$, the minimum number of coins that can be used to sum to $j$ exactly.  This version of the problem is equally natural.  For instance, the standard $O(nt)$-time dynamic programming algorithms are actually designed to solve this more general version.
Some of the newer subset-sum algorithms~\cite{bringmann2017near,koiliaris2017faster,koiliaris2019faster,jin2018simple} also solved the analogous all-targets version of subset sum, but in contrast, Chan and He's algorithms for change-making do \emph{not} work for the all-targets version.

The best previous result for the all-targets change-making problem that we are aware of was an $\tilde{O}(t^{3/2})$-time\footnote{
The $\tilde{O}$ notation hides polylogarithmic factors.
} algorithm by Karl Bringmann and Tomasz Kociumaka (2019), cited as a personal communication (and briefly sketched) in a very recent paper by Lincoln, Polak, and Vassilevska Williams (ITCS'20)~\cite{lincoln2020monochromatic}.
Lincoln et al.'s paper gave a web of fine-grained reductions connecting a variety of problems, including
a reduction from all-targets change-making to the ``monochromatic convolution'' problem, the latter of which is shown to have near $n^{3/2}$ time complexity iff 3SUM has near quadratic time complexity.
Their work implicitly hints at the possibility that the all-targets change-making problem might have near $t^{3/2}$ complexity as well, but the reduction is in the opposite direction.

Our first result is an $\OO(t^{4/3})$-time algorithm for the all-targets change-making problem, interestingly beating $t^{3/2}$ and placing the problem in a different category than monochromatic convolution and all its surrounding problems.  Our algorithm is conceptually simple, exploiting an easy lemma on a binary special case of $(\min,+)$-convolution (using FFTs).

\paragraph{All-targets change-making in terms of $u$.}
Next, we consider the complexity of the all-targets change-making problem in terms of some other natural parameters besides $n$ and $t$: specifically,

\begin{itemize}
    \item the largest coin value, denoted by $u$;
    \item the sum of the $n$ given coin values, denoted by $\sigma$.
\end{itemize}

Some prior works have analyzed algorithms in terms of $u$ and $\sigma$ for the subset sum problem~\cite{pisinger1999linear,koiliaris2017faster}.
A few recent papers have also analyzed algorithms in terms of $u$ for the \emph{0-1 knapsack} and the \emph{unbounded knapsack} problem~\cite{AxiotisT19,BateniHSS18,EisenbrandW20,JansenR19,Tamir09}.
The unbounded knapsack problem is particularly relevant: given integer weights $w_1,\ldots,w_n$
and profits $p_1,\ldots,p_n$
and capacity value $t$, find nonnegative integers $m_1,\ldots,m_n$ to maximize $\sum_i m_ip_i$ such that $\sum_i m_iw_i\le t$.
Change-making is a special case, for example, by setting $w_i=v_i$ and $p_i=Mv_i-1$ for a sufficiently large $M$.
Improving some previous algorithms~\cite{BateniHSS18,Tamir09},
Axiotis and Tzamos (ICALP'19)~\cite{AxiotisT19}
and Jansen and Rohwedder (ITCS'19)~\cite{JansenR19}
independently described
algorithms\footnote{
We found that an $\OO(u^2)$ algorithm (basically the same as Axiotis and Tzamos') appeared earlier in a commentary on a 2016 programming contest problem by Arthur Nascimento,
solved by Yan Soares Couto;
see Problem~L of
\url{https://www.ime.usp.br/~maratona/assets/seletivas/2016/comentarios.pdf}.
}  for unbounded knapsack  running in $\OO(u^2)$ 
time with $u:=\max_i w_i$
(the time bound can be reduced slightly to $O(u^2/2^{\Theta(\sqrt{\log u})})$ by using known slightly subquadratic algorithms for $(\min,+)$-convolution~\cite{WilliamsAPSP}).
However, these algorithms
do not solve the all-targets or
all-capacities version
(computing the optimal profit for every capacity $j=0,\ldots,t$).\footnote{
Cygan et al.~\cite{CyganMWW19} refered to the all-capacities version as {\sc Unbounded-Knapsack$^+$}; Kunnemann et al.~\cite{kunnemann2017fine} called it the \emph{output-intensive} version.
}

For the all-targets version of change-making, it is not difficult to obtain an $O(u^3+t)$-time algorithm, based on a known observation that when the target is sufficiently large, it is always advantageous to use the largest coin.
We describe a new algorithm that improves the running time to $O(u^2\log u + t)$.
Note that  the algorithm is optimal for large $t\gg u^2\log u$, since the output size for the all-targets problem is $\Omega(t)$.

The new algorithm is remarkably simple---just a slight variation of one of the standard dynamic programming solutions, with a 3-line pseudocode!  (See page~\pageref{alg1}.)  It is easily implementable and does not require FFT\@.  However, the correctness argument is far from obvious, and requires a nice application of a number-theoretic theorem by
Erd\H os and Graham~\cite{erdos1972linear} on the
\emph{Frobenius problem}
(about the smallest target value that cannot be represented by a coin system).  Arguably, algorithms that are simple but nontrivial to analyze are the most interesting kinds of algorithms.

\paragraph{All-capacities unbounded knapsack in terms of $u$.}
Our algorithm can be easily modified to solve the unbounded knapsack problem in the all-capacities version, with the same $O(u^2\log u + t)$ time bound.
This also implies an $O(u^2\log u)$-time algorithm for the single-capacity version, which is a bit simpler than the previous $\OO(u^2)$ algorithms~\cite{AxiotisT19,JansenR19} (in addition to extending it to all-capacities).  For unbounded knapsack, a nearly matching conditional lower bound is known~\cite{CyganMWW19,kunnemann2017fine}: more precisely, if  single-capacity unbounded knapsack could be solved in truly subquadratic time for instances with $t,u=\Theta(n)$, then so could $(\min,+)$-convolution.

\paragraph{In terms of $\sigma$.}
We describe a variant of our algorithm with time bound $\OO((t\sigma)^{2/3}+t)$ for the all-targets change-making or all-capacities unbounded knapsack problem.
Note that if $\sigma\ll t$, this is better than our earlier  $\OO(t^{4/3})$ bound for the all-targets change-making.


\paragraph{Single-target change-making.}
For the single-target (original) change-making problem, we also describe how to improve the running time of one of Chan and He's FFT-based algorithms~\cite{chan2020change} from $\OO(t)$ to $\OO(u)$,
which is faster than applying the previous $\tilde{O}(u^2)$-time algorithms~\cite{AxiotisT19,JansenR19} for single-capacity unbounded knapsack.

\paragraph{Single-capacity unbounded knapsack.}
For the single-capacity (original) unbounded knapsack problem, we also describe a simple algorithm with running time $\OO(nu)$, which (ignoring $n^{o(1)}$ factors) simultaneously improves the standard $O(nt)$-time dynamic programming algorithm and the previous
$\OO(u^2)$-time algorithms~\cite{AxiotisT19,JansenR19} (since $u\le t$ without loss of generality, and $n\le u$ after pruning unnecessary items).
There was a previous $O(nu)$-time algorithm by Pisinger~\cite{pisinger1999linear}  for subset sum, but not for unbounded knapsack.

\paragraph{Minimum word break.}
Finally, we consider a generalization of the problem for strings, known as the \emph{minimum word break} problem:
Given a string $s$ with length $n$ and a set $D$ of strings (a ``dictionary'' of ``words'') with total length $m$, express $s$ as a concatenation of words from $D$, using the smallest number of words, where a word may be used multiple times.
It is easy to see that if the alphabet is unary, then the problem is the same as change-making (the single-target version, with $n$ and $m$ corresponding to $t$ and $\sigma$).
A straightforward dynamic programming algorithm runs in
$\OO(nd+m)$ time, where
$d$ denotes the number of distinct lengths among the words in $D$, by using randomized fingerprints~\cite{backurs2016regular}
(which can be made deterministic~\cite{Xu19}).
Because $m\ge \frac{d(d+1)}{2}$, the bound is $\tilde{O}(n\sqrt{m}+m)$.

The decision version of the problem---deciding whether a solution exists, without minimizing the number of words---was considered by Bringmann, Gr\o nlund, and Larsen~\cite{bringmann2017dichotomy}, who gave an $\OO(nm^{1/3}+m)$-time algorithm, using FFT\@
(improving a previous algorithm by Backurs and Indyk~\cite{backurs2016regular}
with running time $\tilde{O}(nm^{1/2-1/18}+m)$).
Bringmann et al.\ also proved a nearly matching conditional lower bound for combinatorial algorithms, assuming the conjecture that $k$-clique requires near $n^k$ time for combinatorial algorithms.
However, they did not obtain results on the \emph{minimum} word break problem: part of the difficulty is that for the optimization problem, the various convolution operations needed change to $(\min,+)$-convolutions, which appear to be more expensive.

Nevertheless, we note that Bringmann et al.'s algorithm can still be adapted to solve the minimum word break problem.  In fact, the time bound $\OO(nm^{1/3}+m)$ remains the same.
This shows that surprisingly the optimization problem is not harder but has the same fine-grained complexity as the decision problem (at least for combinatorial algorithms, assuming the $k$-clique conjecture).  The only new ingredient in our adaptation of Bringmann et al.'s algorithm is the same lemma on $(\min,+)$-convolutions that we have used in our $\OO(t^{4/3})$ algorithm for change-making.

\section{Preliminaries}
The all-targets version of the change-making problem can be formally defined as follows:
\begin{problem} {\sc (All-Targets Change-Making)}
Given a set $V=\{v_1,\dots,v_n\}$ of $n$ positive integers (coin values) and an integer $t$, for each $j=0,\ldots,t$, find the size of
the smallest multiset $S$ (duplicates allowed) of coin values from $V$ such that $S$ sums to exactly $j$, i.e., find the minimum of $\sum_{i=1}^n m_i$ subject to the constraint that $\sum_{i=1}^n m_iv_i=j$, where $m_i\in \mathds{N}$.
\end{problem}

Besides $n$ (the number of coin values) and $t$ (the maximum target value),
we introduce two more parameters: let $u=\max_{i=1}^n v_i$ denote the maximum coin value, and $\sigma=\sum_{i=1}^n v_i$ denote the sum of input coin values. Simple observation reveals some inequalities relating the parameters: we have $n=O(\sqrt{\sigma})$ (because the distinctness of the $v_i$'s implies $\sigma\geq \frac{n(n+1)}{2}$), $n\leq u$, $u\leq t$ (without loss of generality), and $\sigma\leq nu$.  Note that unlike in the subset sum problem, $t$ may be smaller or larger than $\sigma$.

\paragraph{Boolean convolution.} The Boolean convolution $A\circ B$ of two Boolean arrays $A[0,\dots,t_1]$ and $B[0,\dots,t_2]$ is a Boolean array with $t_1+t_2+1$ elements, where $(A\circ B)[j]=\bigvee_{j'=0}^{t_1}(A[j']\wedge B[j-j'])$ 
(we assume out-of-range values are $0$).

Change-making is closely related with Boolean convolution. For any integer $k$, let $C_V^{(k)}[0,\dots,t]$ denote the Boolean array where
$$ \mbox{$C_V^{(k)}[j]=1$ iff there exist $k$ coins from $V$ with their sum being $j$.}
$$
Then $C_V^{(k)}$ can be obtained from the first $t+1$ elements of $C_V^{(k_1)}\circ C_V^{(k_2)}$, for any $k_1,k_2> 0$ where $k=k_1+k_2$.

The Boolean convolution of two arrays of size $O(t)$
can
be computed in $O(t\log t)$ time by FFT\@.

\paragraph{$(\min,+)$-convolution.} The $(\min,+)$-convolution $A\star B$ of two arrays $A[0,\dots,t_1]$ and $B[0,\dots,t_2]$ is an array with $t_1+t_2+1$ elements, where
$(A\star B)[j]=\min_{j'=0}^{t_1}
(A[j']+B[j-j'])$
(we assume out-of-range values are $\infty$).

Change-making is also related to $(\min,+)$-convolution. For a set $V$ of coin values, let $D_V[0,\dots,t]$ denote the array where $$D_V[j]
= \mbox{the minimum number of coins from $V$ needed to sum to $j$}
$$
(if no solution exists, $D_V[j]=\infty$). Then  $D_{V_1\cup V_2}$ can be obtained from the first $t+1$ elements of $D_{V_1}\star D_{V_2}$.

It has been conjectured by some researchers that $(\min,+)$-convolution cannot be solved in truly subquadratic time (e.g., see~\cite{CyganMWW19,kunnemann2017fine}).
However, the following lemma shows that a subquadratic algorithm is possible
for the special case of $(\min,+)$-convolution where the second array is ``binary'', i.e., all entries of $B$ are in $\{1,\infty\}$.
The lemma (at least the first part) was known before; for example, see a paper by
Kosaraju~\cite{DBLP:conf/focs/Kosaraju89a}, who considered $(\min,\max)$-convolutions, which our special case reduces to.
(A similar trick was also used in the context of matrix multiplication, for computing the $(\min,+)$-product when one of the matrices is binary~\cite{vassilevska2009all,duan2009fast,chan2010more,barr20191}, and for computing the dominance product~\cite{Matousek91}.)


\begin{lemma}\label{lemma:1}
Given two arrays $A[0,\dots,t]$ and $B[0,\dots,t]$ where all entries of $B$ are in $\{1,\infty\}$, we can compute the $(\min,+)$-convolution of $A$ and $B$ in $\tilde{O}(t^{3/2})$ time.
%
%

Furthermore, if we just want $t'$ user-specified entries of the $(\min,+)$-convolution, the time bound
may be reduced to $\tilde{O}(t\sqrt{t'})$.
%
\end{lemma}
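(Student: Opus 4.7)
The plan is to reduce to a small number of Boolean convolutions via a sort-and-bucket trick. Since every entry of $B$ lies in $\{1,\infty\}$, we may write
\[ (A\star B)[j] \;=\; 1 + \min\{\,A[j'] : B[j-j']=1\,\}, \]
so it suffices, for each desired output position $j$, to identify the index $j'$ that minimizes $A[j']$ subject to $B[j-j']=1$. Let $\widehat B$ be the $\{0,1\}$-array obtained from $B$ by replacing $1\mapsto 1$ and $\infty\mapsto 0$.

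First I would sort the indices $\{0,1,\ldots,t\}$ in nondecreasing order of $A[\cdot]$ and partition them, in that sorted order, into $g$ contiguous groups $G_1,\ldots,G_g$ each of size at most $\lceil (t+1)/g\rceil$. For each $k$, I compute the Boolean convolution $C_k := \mathbf{1}_{G_k}\circ \widehat B$ using FFT in $O(t\log t)$ time, where $\mathbf{1}_{G_k}$ denotes the indicator array of $G_k$. By construction, $C_k[j]\neq 0$ iff some $j'\in G_k$ has $B[j-j']=1$, and every $A$-value occurring in $G_k$ is at most every $A$-value in $G_{k+1}\cup\cdots\cup G_g$.

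To answer a query $j$, I scan $k=1,2,\ldots,g$ until finding the smallest $k^\ast$ with $C_{k^\ast}[j]\neq 0$ (or return $\infty$ if no such $k^\ast$ exists); by monotonicity of values across groups, the optimal $j'$ must lie in $G_{k^\ast}$, so I finish by explicitly enumerating the at most $O(t/g)$ indices of $G_{k^\ast}$ and picking the smallest-value one with $B[j-j']=1$. The running time is $\tilde O(gt)$ for the FFTs plus $O(t'(g+t/g))$ for handling the $t'$ queries; choosing $g=\lceil\sqrt{t'}\rceil$ balances these to $\tilde O(t\sqrt{t'})$, which is the second claim, while $t'=t+1$ recovers the first claim's $\tilde O(t^{3/2})$ bound. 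The only step that needs more than bookkeeping is the justification that restricting the final enumeration to the single bucket $G_{k^\ast}$ is correct, but this is immediate from the sort-based monotonicity of the groups, so I do not expect any real obstacle here.
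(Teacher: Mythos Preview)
Your proposal is correct and essentially identical to the paper's proof: both sort the indices by $A$-value, partition them into $\sqrt{t'}$ buckets of equal size, compute one Boolean convolution per bucket against the indicator of $B$, and then for each query scan the buckets to locate the first hit before brute-forcing inside that bucket. The paper phrases the bucketing as ``replace $A$-values by ranks and divide the range $[t]$ into $\sqrt{t'}$ subintervals,'' but this is the same partition as your sorted groups $G_1,\ldots,G_g$.
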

\begin{proof}
By sorting and replacing elements by their ranks, we may assume the values of $A$ are in $[t]$, and are distinct (without loss of generality). Divide the range $[t]$ into $\sqrt{t'}$ subintervals of length $t/\sqrt{t'}$.  For each such subinterval $I$, define a Boolean array $A'_I$ with $A'_I[j]=1$ iff $A[j]\in I$, and
define a Boolean array $B'$ with $B'[j]=1$ iff $B[j]\neq\infty$;
compute the Boolean convolution between $A'_I$
and $B'$; this requires $\sqrt{t'}$ FFTs and takes $\OO(t\sqrt{t'})$ time. Then for each index $j$ for which we want to compute the output entry, we can identify which subinterval contains the minimum answer
(namely, the smallest subinterval $I$ such that
$(A'_I\circ B')[j]$ is true) in $O(\sqrt{t'})$ time, so we can do a brute-force search in $O(t/\sqrt{t'})$ time; the total time for $t'$ output entries is
$O(t'\cdot (\sqrt{t'}+t/\sqrt{t'}))=O(t\sqrt{t'})$.
\end{proof}

\section{$\tilde{O}(t^{4/3})$ Algorithm}

\paragraph{Previous algorithm.}
Before presenting the new algorithm, we first give a sketch on the previous $\tilde{O}(t^{3/2})$-time algorithm by Bringmann and Kociumaka  (as mentioned in~\cite{lincoln2020monochromatic}). Let $\ell_0$ be a parameter to be chosen later.
Let $H=\{v_i : v_i > \ell_0\}$
be the set of all \emph{heavy} coin values, and let
$L=\{v_i : v_i \le \ell_0\}$
be the set of all \emph{light} coin values. Because the coin values are distinct, $|L|\le\ell_0$. To sum to any value $j\leq t$, we can use at most $t/\ell_0$ heavy coins. We use Boolean convolution to compute the array $C^{(k)}_H$
from $C^{(k-1)}_H$ for each $k=1,\dots,\lfloor t/\ell_0\rfloor$.  The total time for these $\lfloor t/\ell_0\rfloor$ convolutions is $\OO(t^2/\ell_0)$.
We can thus obtain $D_H[j]$ by
taking the minimum $k\le t/\ell_0$ such that $C_H^{(k)}[j]>0$. To finish, we use the classical dynamic programming algorithm to add the light coins.
Namely, for each $j=1,\dots,t$, we set $D_V[j]=\min\{D_H[j],\min_{v_i\in L}D_V[j-v_i]+1\}$. This
step takes $O(\ell_0 t)$ time.  The overall running time is
\[\OO\left(\frac{t^2}{\ell_0} \,+\, \ell_0 t\right).\]
To balance cost, we choose $\ell_0=\sqrt{t}$ and obtain a time bound of $\OO(t^{3/2})$.

\paragraph{New algorithm.}
To improve the running time, we describe a more efficient way to add the light coins, by using $(\min,+)$-convolution.
As before, we first compute $D_H$
for the heavy coins in $\OO(t^2/\ell_0)$ time.
Initialize $S$ to $H$.


Now, consider a fixed value $\ell\le \ell_0/2$, and consider the subset of light coins $L_\ell=\{v_i: v_i\in (\ell,2\ell]\}$.
In order to add $L_\ell$ to $S$,
we need to compute $D_{S\cup L_\ell}$ from $D_S$.
Naively, one could perform a single $(\min,+)$-convolution of $D_S$ with
$D_{L_\ell}$, but this is expensive, and $D_{L_\ell}$ is not known yet (and is not binary).
A better approach is to
do \emph{multiple} $(\min,+)$-convolutions by dividing the array into smaller blocks of size $O(\ell)$, as follows:

For each $i=0,\dots,t/\ell$, we compute $D_{S\cup L_\ell}[\ell i,\dots,\ell (i+1)]$ by taking a $(\min,+)$-convolution $D'$ of $D_{S\cup L_\ell}[\ell(i-2),\dots,\ell i]$ with a binary array $B[\ell,\dots,2\ell]$ using Lemma \ref{lemma:1}, where $B[j]=1$
if $j\in L_\ell$, and
$B[j]=\infty$ otherwise. Then
$D_{S\cup L_\ell}[\ell i,\dots,\ell (i+1)]$ is
the entry-wise minimum of $D'[\ell i,\dots,\ell(i+1)]$
and $D_S[\ell i,\dots,\ell(i+1)]$, because if the optimal solution (with coin set $S\cup L_\ell$) for a target value in $[\ell i,\ell(i+1)]$ uses a coin in $L_\ell$, then after taking out this coin with value in $(\ell,2\ell]$, the remaining target value is in $[\ell (i-2),\ell i]$. (This explains why we group the coins with roughly the same value in $L_\ell$.)
Each of the above $O(t/\ell)$ $(\min,+)$-convolutions is done to arrays of size $O(\ell)$ (after shifting indices).
Thus, the total running time is $\tilde{O}((t/\ell)\cdot \ell^{3/2})=\tilde{O}(\sqrt{\ell} t)$.

We repeat the above steps for all $\ell$'s that are powers of 2 and smaller than $\ell_0$, until all coin values are added to $S$. This requires $O(\log \ell_0)$ rounds, and the total running time forms a geometric series bounded by $\OO(\sqrt{\ell_0}t)$.
The overall running time is \[\OO\left(\frac{t^2}{\ell_0} \,+\, \sqrt{\ell_0}t\right).
\]
To balance cost, we choose $\ell_0\approx t^{2/3}$ and obtain a time bound of $\tilde{O}(t^{4/3})$.

\begin{theorem}\label{thm:1}
The all-targets change-making problem can be solved in $\tilde{O}(t^{4/3})$ time.
\end{theorem}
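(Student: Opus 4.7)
The plan is a heavy/light decomposition combined with the binary $(\min,+)$-convolution lemma (Lemma~\ref{lemma:1}). Fix a threshold $\ell_0$ to be tuned later, let $H = \{v_i : v_i > \ell_0\}$ and $L = \{v_i : v_i \le \ell_0\}$. Since any representation of a value $\le t$ uses at most $t/\ell_0$ heavy coins, I first compute $C_H^{(k)}$ for $k=1,\ldots,\lfloor t/\ell_0\rfloor$ by iterated Boolean convolution, reading off $D_H[j]$ as the smallest $k$ for which $C_H^{(k)}[j]=1$. This costs $\OO(t^2/\ell_0)$ by standard FFT.

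To add the light coins, I would bucket them dyadically: for each power of two $\ell \le \ell_0/2$, let $L_\ell = \{v_i \in (\ell,2\ell]\}$, and process these buckets one at a time, maintaining the array $D_S$ for the current coin set $S$ (initialized to $H$). The key observation is that if an optimal solution for $S\cup L_\ell$ at target $j \in [\ell i, \ell(i+1)]$ uses at least one coin from $L_\ell$, then removing one such coin leaves a target in $[\ell(i-2), \ell i]$ representable using $S\cup L_\ell$. So block by block, I would compute a $(\min,+)$-convolution of $D_{S\cup L_\ell}[\ell(i-2),\ldots,\ell i]$ with a binary array $B[\ell,\ldots,2\ell]$ indicating membership in $L_\ell$ (other entries $\infty$), and then take the entry-wise minimum with $D_S[\ell i,\ldots,\ell(i+1)]$ to cover solutions using no $L_\ell$-coin. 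Since the blocks are processed left-to-right, the relevant lookback region $D_{S\cup L_\ell}[\ell(i-2),\ldots,\ell i]$ has already been correctly populated by previous iterations.

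Each block-convolution acts on arrays of length $O(\ell)$, so Lemma~\ref{lemma:1} makes it cost $\OO(\ell^{3/2})$; with $O(t/\ell)$ blocks per dyadic level, the level costs $\OO(\sqrt{\ell}\,t)$, and summing over the $O(\log \ell_0)$ dyadic levels gives a geometric series dominated by $\OO(\sqrt{\ell_0}\,t)$. The total is $\OO(t^2/\ell_0 + \sqrt{\ell_0}\,t)$, and setting $\ell_0 = t^{2/3}$ balances the two terms at $\OO(t^{4/3})$.

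The main subtlety to verify is the correctness of the block decomposition---specifically, that looking back two blocks is wide enough to capture every residual target after peeling off one $L_\ell$-coin, and that this single-peel step composes correctly when a solution uses many coins from $L_\ell$. The width part is immediate, since $v \le 2\ell$ for $v\in L_\ell$ implies $j - v \ge \ell(i-2)$. The composition part follows by induction on block index: a solution using $k\ge 1$ coins from $L_\ell$ is recovered in a single peel step from a smaller-index entry of $D_{S\cup L_\ell}$ that already accounts for the other $k-1$ coins. This inductive argument, together with grouping coins of roughly equal value so that one peel shifts the target by at most $2\ell$, is what makes the dyadic structure necessary; I expect this to be the main point requiring care in the write-up.
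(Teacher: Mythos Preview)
Your proposal is correct and matches the paper's proof essentially step for step: the same heavy/light split, the same dyadic bucketing $L_\ell=\{v_i\in(\ell,2\ell]\}$, the same block-by-block $(\min,+)$-convolution with a binary array via Lemma~\ref{lemma:1}, and the same balancing $\ell_0\approx t^{2/3}$. Your explicit inductive justification for why peeling one $L_\ell$-coin at a time suffices is a welcome addition that the paper leaves implicit.
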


\paragraph{Remark.}
If we choose $\ell_0=u$ instead, the heavy coin case can be ignored and we obtain an $\tilde{O}(t\sqrt{u})$-time algorithm,
which is faster for small $u$.  We will give still faster algorithms for small $u$ in the next section.



\section{$O(u^2\log u+t)$ Algorithm}

We now explore more algorithms with running time sensitive to $u$.

\paragraph{Warm-up.}
We first observe that there is a simple
algorithm with $O(u^3+t)$ running time.
We use the following lemma, which is ``folklore'':\footnote{
Bateni et al.~\cite[Lemma~7.2]{BateniHSS18} gave a proof for the (more general) unbounded knapsack problem, using the pigeonhole principle, similar to what we give here
(Eisenbrand and Weismantel~\cite{EisenbrandW20} also proved a similar statement for higher-dimensional unbounded knapsack).
But it was known much earlier: we personally learned of the
pigeonhole proof for coin changing from comments by Bruce Merry in 2006
on a US Olympiad question (\url{https://contest.usaco.org/TESTDATA/DEC06.fewcoins.htm}),
and the same pigeonhole proof
for unbounded knapsack from a Chinese web post in 2016
(\url{https://www.zhihu.com/question/27547892/answer/133582594}).
}

\begin{lemma}\label{lemma:6}
For any target value $j\ge u^2$, any optimal solution to the change-making problem must use the largest coin value $u$.
%
\end{lemma}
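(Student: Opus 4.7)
The plan is to argue by contradiction using a pigeonhole-exchange argument. Suppose an optimal solution to target $j \ge u^2$ never uses the coin of value $u$; then since the coin values in $V$ are distinct, every coin used has value at most $u-1$. Letting $N = \sum_i m_i$ denote the total number of coins used, we get $N(u-1) \ge j \ge u^2$, and therefore $N \ge u+1$ (after noting the trivial case $u=1$ separately).

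Next, I would list the chosen coins one by one (with multiplicities) in any order and form the prefix sums $0 = s_0 < s_1 < \cdots < s_N = j$. Since $N \ge u$, the first $u+1$ prefix sums $s_0, s_1, \ldots, s_u$ take values in a set of size $u$ modulo $u$, so by pigeonhole there exist indices $0 \le a < b \le u$ with $s_a \equiv s_b \pmod{u}$. Consequently $s_b - s_a$ is a positive multiple of $u$, say $s_b - s_a = k' u$ with $k' \ge 1$, and it equals the sum of exactly $b - a$ of the coins in the solution.

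The crucial step is the strict inequality $k' < b-a$. Because none of those $b-a$ coins has value $u$, each is at most $u-1$, giving
\[
k' u \;=\; s_b - s_a \;\le\; (b-a)(u-1) \;<\; (b-a)u,
\]
hence $k' \le b - a - 1$. Now replace those $b - a$ coins (summing to $k'u$) with $k'$ copies of the coin of value $u$; this yields a feasible solution summing to the same $j$ but using $N - (b-a) + k' \le N - 1$ coins, contradicting optimality.

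The main obstacle I expect is bookkeeping rather than a conceptual hurdle: I have to be careful that the hypothesis $j \ge u^2$ is strong enough to give $N \ge u$ (so that enough prefix sums exist for pigeonhole), and that the strict inequality $s_b - s_a < (b-a)u$ — which powers the exchange — really follows from the assumption that coin $u$ is absent, rather than from something weaker that would only give $k' \le b-a$. Once those two points are locked in, the exchange step produces a strictly better solution and closes the contradiction.
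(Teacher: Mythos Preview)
Your proof is correct and follows essentially the same approach as the paper: both use pigeonhole on prefix sums modulo $u$ to find a consecutive block of coins whose sum is a multiple of $u$, then swap that block for fewer copies of the coin $u$. The only cosmetic difference is that the paper phrases it as ``if no coin $u$ is used then $h<u$, hence the sum is $<u^2$,'' whereas you argue the contrapositive directly (from $j\ge u^2$ deduce $N\ge u+1$, then derive the contradiction); your explicit verification of the strict inequality $k'<b-a$ is exactly the point the paper leaves implicit in the phrase ``since $u$ is the largest coin value.''
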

\begin{proof}
Suppose that an optimal solution $X$ for a target value $j$ does not use the coin value~$u$.

A simple argument shows that $j < u^3$: If $X$ uses a coin value $v_i$ at least $u$ times,
we can replace $u$ copies of $v_i$ with
$v_i$ copies of $u$, and the number of coins in $X$ would decrease: a contradiction.  Thus, each of the at most $u$ coin values is used fewer than $u$ times, and so the sum of $X$ must be less than $u^3$.

We give a better argument showing $j < u^2$ by using the pigeonhole principle:
Let $\langle x_1,\ldots,x_h\rangle$ be the sequence of coins used in $X$, with duplicates included, in an arbitrary order.
Define the prefix sum $s_i=x_1+\cdots+x_i$.
Suppose $h\ge u$.  By the pigeonhole principle, there must exist $0\le i<j\le h$ with $s_i\equiv s_j\pmod{u}$.  Then the subsequence $x_{i+1},\ldots,x_j$ sums to a number divisible by $u$.  We can replace this subsequence with some number of copies of $u$, and the number of coins in $X$ would decrease (since $u$ is the largest coin value): a contradiction.
Thus $h < u$, and so the sum of $X$ is less than $u^2$.
%
\end{proof}

The above lemma ensures that it is sufficient to compute $D_V[j]$ for all $j < u^2$;
by the naive dynamic programming algorithm, this step takes $O(nu^2)\le O(u^3)$ time.
Afterwards,
for $j=u^2,\ldots,t$, we can simply set $D_V[j]=D_V[j-u]+1$;
this step takes $O(t)$ time.
We thus get the time bound $O(u^3+t)$.

If in the first part we instead use the $\OO(t\sqrt{u})$-time
algorithm in the remark after Theorem~\ref{thm:1} (with $t$ replaced by $u^2$), then
the first part takes
$\OO(u^2\sqrt{u})$ time.
The total time is then reduced to
$O(u^{2.5}\polylog u + t)$.  (This requires FFT, however.)

\paragraph{New algorithm.}
To improve the running time further, we use
number-theoretic results on the \emph{Frobenius problem}, which has received much attention from mathematicians: given $k$ positive integer coin values $v_1>\cdots>v_k$ with $\gcd(v_1,\dots,v_k)=1$, what is the largest number that cannot be represented?  For $k=2$, classical results show that the number is exactly $v_1v_2-v_1-v_2$.  For $k\ge 3$, the problem becomes much more challenging, for which there are no closed-form formulas.  In 1972,
Erd\H{o}s and Graham~\cite{erdos1972linear} proved an upper bound of
$2\floor{\frac{v_1}k}v_2-v_1$, which will be useful in our algorithmic application:


\begin{lemma}\label{lemma:7}
\emph{(Erd\H os--Graham)}
Given integers $v_1>\dots>v_k>0\ (k\ge 2)$ with $\gcd(v_1,\dots,v_k)=1$, any integer greater than $2\floor{\frac{v_1}k}v_2-v_1$ can be expressed as a nonnegative integer linear combination of $v_1,\dots,v_k$.
\end{lemma}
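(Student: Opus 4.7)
The plan is to use the standard ``smallest representative per residue class'' reduction for Frobenius-type bounds. For each residue $r\in\{0,1,\dots,v_1-1\}$, let $C(r)$ be the smallest nonnegative integer that is congruent to $r$ modulo $v_1$ and expressible as a nonnegative integer combination $\sum_{i=1}^{k}a_iv_i$; the coprimality hypothesis $\gcd(v_1,\dots,v_k)=1$ guarantees $C(r)<\infty$ for every $r$.

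First I would verify a short reduction: the Frobenius number of the system $v_1,\dots,v_k$ equals $\max_r C(r)-v_1$. Indeed, any $n$ with $n\equiv r\pmod{v_1}$ and $n\ge C(r)$ is representable (simply add copies of $v_1$ to a representation of $C(r)$), while $C(r)-v_1$ is itself not representable, by minimality of $C(r)$. Hence it suffices to establish
\[
\max_{0\le r<v_1}\, C(r)\;\le\; 2\lfloor v_1/k\rfloor\, v_2,
\]
since for $n>2\lfloor v_1/k\rfloor v_2-v_1$ we then have $n>C(n\bmod v_1)-v_1$, and the congruence $n\equiv C(n\bmod v_1)\pmod{v_1}$ forces $n-C(n\bmod v_1)$ to be a nonnegative multiple of $v_1$, yielding a representation of $n$.

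The substantive step is to exhibit, for each $r$, nonnegative integers $b_2,\dots,b_k$ with $\sum_{j\ge 2}b_jv_j\equiv r\pmod{v_1}$ and $\sum_{j\ge 2}b_jv_j\le 2qv_2$, where $q=\lfloor v_1/k\rfloor$. I would approach this via a pigeonhole argument: since $k(q+1)\ge v_1+1$, the goal is to build a family of more than $v_1$ nonnegative combinations of $v_2,\dots,v_k$, each of value at most $2qv_2$, and to show that they exhaust all residues modulo $v_1$. A natural candidate family is $\{iv_2+v_j:0\le i\le q,\ j\in\{2,\dots,k\}\}\cup\{0\}$, whose elements are bounded by $(q+1)v_2\le 2qv_2$ (using $q\ge 1$, which follows from $v_1\ge k$).

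The hard part, and the crux of the Erd\H{o}s--Graham theorem, will be showing that this family (or a close variant) actually hits every residue class modulo $v_1$. Mere counting is not enough because the combinations may collide modulo $v_1$; the argument must exploit $\gcd(v_1,\dots,v_k)=1$. My intended strategy is a contradiction argument: if some residue is missed, then the coincidences within the family induce enough linear dependencies among $v_2,\dots,v_k$ modulo $v_1$ to force $\gcd(v_1,v_2,\dots,v_k)>1$. I expect this combinatorial step to require the most care, and it is plausible that one has to work with a slightly different family, or group the sums by residue class and bound the multiplicity per class more delicately, in order to pin down the precise constant $2\lfloor v_1/k\rfloor$.
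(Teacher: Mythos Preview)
The paper does not prove this lemma at all; it is quoted as a known theorem of Erd\H{o}s and Graham with a citation to their 1972 paper, and is used as a black box in the analysis of the $O(u^2\log u+t)$ algorithm.  So there is no ``paper's own proof'' to compare your proposal against.

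Regarding your proposal on its own merits: the residue-class reduction in your first two paragraphs is correct and is indeed the standard route to Frobenius-type upper bounds.  The gap is exactly where you say it is---you have not carried out the key step of showing that a suitable family of small combinations covers every residue class modulo $v_1$.  Two concrete issues with the sketch as written: (i) your candidate family $\{iv_2+v_j:0\le i\le q,\ 2\le j\le k\}\cup\{0\}$ has at most $(q+1)(k-1)+1$ members, and since $(q+1)(k-1)=(q+1)k-(q+1)\ge v_1+1-(q+1)=v_1-q$, the count can fall short of $v_1$, so even the cardinality heuristic does not close; (ii) the proposed contradiction (``coincidences force $\gcd>1$'') is not an argument yet---collisions modulo $v_1$ among elements $iv_2+v_j$ only give relations like $v_j\equiv v_{j'}+(i'-i)v_2\pmod{v_1}$, and it is not clear how to leverage these to contradict coprimality.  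The actual Erd\H{o}s--Graham proof requires a more careful combinatorial argument than a single pigeonhole pass; if you want to reconstruct it, consult their original paper or Dixmier's follow-up, but for the purposes of this paper the lemma is simply cited.
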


In terms of $u=\max_i v_i$, Erd\H os and Graham's bound is $O(u^2/k)$, which is known to be tight in the worst case, within a constant factor (see~\cite{Dix} for improvements on the constant factor).  For constant~$k$, the bound remains quadratic, as in the 2-coins case. In our algorithmic application, we will consider non-constant $k$---here, the $k$ in the denominator will prove crucial.

First, let us restate the bound  more generally without assuming  $\gcd(v_1,\dots,v_k)=1$:

\begin{corollary}\label{corollary:8}
Given integers $v_1>\dots>v_k>0\ (k\ge 2)$ with $\gcd(v_1,\dots,v_k)=d$, any integer that is greater than $2\floor{\frac{v_1}{dk}}v_2 - v_1$ and is divisible by $d$ can be expressed as a nonnegative integer linear combination of $v_1,\dots,v_k$.
\end{corollary}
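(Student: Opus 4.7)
The plan is to reduce the general corollary directly to Lemma~\ref{lemma:7} by rescaling. Let $d=\gcd(v_1,\ldots,v_k)$ and set $v_i'=v_i/d$ for each $i$. Then $v_1'>\cdots>v_k'>0$ are positive integers with $\gcd(v_1',\ldots,v_k')=1$, so they satisfy the hypothesis of Erd\H{o}s--Graham.

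Next, take any integer $N$ that is divisible by $d$ and satisfies $N>2\lfloor\tfrac{v_1}{dk}\rfloor v_2-v_1$. Write $N=dN'$ for some integer $N'$. Dividing the bound on $N$ by $d$ gives
\[
N' \;>\; 2\left\lfloor\frac{v_1}{dk}\right\rfloor\frac{v_2}{d}-\frac{v_1}{d} \;=\; 2\left\lfloor\frac{v_1'}{k}\right\rfloor v_2'-v_1'.
\]
By Lemma~\ref{lemma:7} applied to $v_1',\ldots,v_k'$, there exist nonnegative integers $m_1,\ldots,m_k$ with $N'=\sum_{i=1}^k m_i v_i'$. Multiplying through by $d$ yields $N=\sum_{i=1}^k m_i v_i$, as required.

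There is essentially no obstacle: the only thing to verify carefully is that the floor $\lfloor v_1/(dk)\rfloor$ equals $\lfloor v_1'/k\rfloor$, which is immediate since $v_1/d=v_1'$ is already an integer. Thus the corollary follows with a one-line rescaling from the Erd\H{o}s--Graham bound.
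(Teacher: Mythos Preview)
Your proof is correct and takes essentially the same rescaling approach as the paper: apply Lemma~\ref{lemma:7} to $v_1/d,\ldots,v_k/d$ and multiply the resulting bound by $d$. The paper's proof is in fact a one-line version of exactly what you wrote.
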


\begin{proof}
Apply Lemma \ref{lemma:7} to the numbers $v_1/d,\dots,v_k/d$. The bound becomes \\  $\left(2\floor{\frac{v_1/d}k}v_2/d - v_1/d\right) \cdot d.$
\end{proof}

We use Corollary~\ref{corollary:8} to prove a more refined version of Lemma \ref{lemma:6}, which takes into account the $k$ largest coin values instead of just the largest value:

\begin{lemma}\label{lem:klargest}
Let $v_1,\dots,v_k\le u$ be the $k$ largest input coin values.
For any target value $j\ge 2u^2/k$, any optimal solution to the change-making problem must use at least one coin from $\{v_1,\dots,v_k\}$.
\end{lemma}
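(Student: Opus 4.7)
The plan is to argue by contradiction. (For $k=1$ the claim follows from Lemma~\ref{lemma:6}, since $2u^2/k=2u^2\ge u^2$; I therefore assume $k\ge 2$ throughout, so that Corollary~\ref{corollary:8} applies, and order the $k$ largest coin values so that $v_1>v_2>\cdots>v_k$.) Suppose some optimal solution $X$ for a target $j\ge 2u^2/k$ uses no coin from $\{v_1,\dots,v_k\}$; then every coin in $X$ has value at most $v_{k+1}\le v_k-1$. Let $d=\gcd(v_1,\dots,v_k)$. Since $v_1>\cdots>v_k$ are distinct positive multiples of $d$, we obtain the crucial inequality $u=v_1\ge kd$. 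Set $\alpha:=2\lfloor v_1/(dk)\rfloor v_2-v_1$, the Erd\H{o}s--Graham threshold from Corollary~\ref{corollary:8}, and note $\alpha<2u^2/(dk)$. My strategy is to extract a submultiset $R\subseteq X$ with $\mathrm{sum}(R)\equiv 0\pmod d$ and $\mathrm{sum}(R)>\alpha$, invoke Corollary~\ref{corollary:8} to write $\mathrm{sum}(R)=\sum_{i=1}^k a_iv_i$ with nonnegative integers $a_i$, and substitute this combination for $R$ inside $X$. The resulting representation of $j$ will use strictly fewer coins: each coin of $R$ has value at most $v_{k+1}$ while each $v_i\ge v_k$, so $|R|\cdot v_{k+1}\ge\mathrm{sum}(R)\ge v_k\sum_i a_i$, giving $|R|/\sum_i a_i\ge v_k/v_{k+1}>1$, which forces the integer strict inequality $|R|>\sum_i a_i$ and contradicts optimality.

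To construct $R$, I split on whether $j\equiv 0\pmod d$. In the easy case (which includes $d=1$), take $R=X$, so $\mathrm{sum}(R)=j\ge 2u^2/k\ge 2u^2/(dk)>\alpha$. For $d\ge 2$ with $j\not\equiv 0\pmod d$, I first observe $|X|\ge j/v_{k+1}\ge(2u^2/k)/u=2u/k\ge 2d$ (using $u\ge kd$), then iteratively strip from $X$ any consecutive sub-sequence whose sum is divisible by $d$---always available while the current size is at least $d$, by pigeonhole on any $d+1$ successive prefix sums modulo $d$---until fewer than $d$ coins remain. Calling that residual $Q$: since each strip removed a multiple of $d$, we have $\mathrm{sum}(Q)\equiv j\pmod d$, and $|Q|\le d-1$ gives $\mathrm{sum}(Q)\le(d-1)v_{k+1}$. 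Setting $R:=X\setminus Q$, this $R$ has $\mathrm{sum}(R)\equiv 0\pmod d$ and $\mathrm{sum}(R)\ge j-(d-1)v_{k+1}\ge j-(d-1)u$.

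The main obstacle is then the threshold check $\mathrm{sum}(R)>\alpha$. Using $\alpha<2u^2/(dk)$ and $v_{k+1}\le u$, this reduces to verifying $2u^2/k-(d-1)u>2u^2/(dk)$, which after rearranging simplifies to $2u^2/(kd)>u$, i.e., $2u>kd$; this follows strictly from $u\ge kd$. Balancing the Erd\H{o}s--Graham threshold $\alpha\sim u^2/(dk)$ against the loss $(d-1)v_{k+1}$ from peeling $Q$, via the crucial inequality $u\ge kd$ (itself stemming from the $v_i$'s being distinct multiples of $d$), is the technically delicate part; once it goes through, the contradiction follows from Corollary~\ref{corollary:8} and the counting argument above.
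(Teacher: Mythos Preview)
Your proof is correct. Both your argument and the paper's rest on the same two ingredients—Corollary~\ref{corollary:8} and a pigeonhole argument modulo $d=\gcd(v_1,\dots,v_k)$—but you organize the extraction of the ``large, divisible-by-$d$'' piece differently. The paper greedily chunks the coin sequence of $X$ into blocks whose sums lie in $(\frac{2u^2}{dk}-u,\frac{2u^2}{dk}]$, then applies pigeonhole to the \emph{block}-level prefix sums: if there are more than $d$ blocks, some consecutive run of blocks has sum divisible by $d$ and exceeding the Erd\H{o}s--Graham threshold, and replacing it contradicts optimality; otherwise $j<d\cdot\frac{2u^2}{dk}=\frac{2u^2}{k}$. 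You instead apply pigeonhole at the \emph{coin} level, repeatedly stripping away zero-mod-$d$ runs until a residue $Q$ of at most $d-1$ coins remains, and take $R=X\setminus Q$; your threshold check then hinges on the explicit inequality $u\ge kd$ (from the $v_i$ being distinct positive multiples of $d$), which you use to show $j-(d-1)u>\alpha$. The paper's chunking also tacitly needs $2u^2/(dk)\ge u$ (equivalently $2u\ge dk$) for the block intervals to be well-formed, so your route has the virtue of making this arithmetic fact explicit. Your coin-counting comparison $|R|\cdot v_{k+1}\ge\mathrm{sum}(R)\ge v_k\sum_i a_i$ is a clean alternative to the paper's one-line ``larger values'' remark; both yield the same contradiction.
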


\begin{proof}
We may assume $k\ge 2$ (because of Lemma~\ref{lemma:6}).
Let $d = \gcd(v_1,\dots,v_k)$.
Suppose that an optimal solution $X$ for a target value $j$ does not use
any coins from $\{v_1,\dots,v_k\}$.

Consider the sequence of coins used in $X$, with duplicates included,
in an arbitrary order.
Divide the sequence into subsequences $X_1,\dots,X_h$, each of which has sum in $(\frac{2u^2}{dk}-u, \frac{2u^2}{dk}]$, except that the last has sum at most $\frac{2u^2}{dk}-u$. Suppose $h>d$. Define $s_i$ to be the sum of the concatenation of $X_1,\dots,X_i$. By the pigeonhole principle,
there exist $0\le i<j<h$ with $s_i\equiv s_j\pmod{d}$. Then the subsequence formed by concatenating $X_{i+1},\ldots,X_j$ sums to a number divisible by $d$ and greater than $\frac{2u^2}{dk}-u$. By Corollary \ref{corollary:8}, we can replace this subsequence with coins from the set $\{v_1,\dots,v_k\}$, and the number of coins in $X$ would decrease (since $v_1,\dots,v_k$ have larger values): a contradiction. Thus  $h\le d$, and so the sum of $X$ is
less than $d \cdot \frac{2u^2}{dk} = 2u^2/k$.
\end{proof}

Thus, the optimal solution for target value $j$ must use
at least one coin value which is among the $\ceil{2u^2/j}$ largest.
This leads to the following extremely simple algorithm, which
is just a small modification to the standard dynamic programming algorithm (no FFT required):

\begin{algorithm}
\label{alg1}
\begin{algorithmic}[1]
\State Sort $v_1,\dots,v_n$ in decreasing order, and set $D_V[0]=0$.
\For {$j = 1,\dots,t$}
    \State Set $D_V[j] = \min_{1\leq i\leq \ceil{2u^2/j}:\: v_i\le j}  D_V[j-v_i] + 1$.
\EndFor
\end{algorithmic}
\end{algorithm}

The total running time is bounded by a Harmonic series: \[O\left(\sum_{j=1}^{t} \left( \frac{u^2}j + 1\right)\right) \,=\, O(u^2 \log u + t).\]


\begin{theorem}\label{thm:2}
The all-targets change-making problem can be solved in $O(u^2\log u+t)$ time.
\end{theorem}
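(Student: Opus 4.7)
The plan is to prove Theorem \ref{thm:2} by directly analyzing Algorithm 1 listed just above the statement. The argument will have two parts: a correctness proof that leans entirely on Lemma \ref{lem:klargest}, and a Harmonic-series calculation for the running time. No FFT or further algorithmic machinery is needed.

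For correctness I would proceed by induction on $j$. The classical dynamic programming recurrence $D_V[j] = \min_{i:\, v_i \le j}(D_V[j-v_i] + 1)$ is correct, and Algorithm 1 merely restricts this minimum to the top $k := \lceil 2u^2/j \rceil$ coin values (which are $v_1, \ldots, v_k$ after the initial decreasing sort). Since $j \ge 2u^2/k$ by construction, Lemma \ref{lem:klargest} guarantees that every optimal solution at target $j$ uses at least one coin from $\{v_1, \ldots, v_k\}$; consequently, dropping all other coins from the minimization leaves its value unchanged, and the inductive step goes through. Combined with the base case $D_V[0]=0$, this shows that Algorithm 1 returns the correct array.

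For the running time, iteration $j$ of the outer loop performs $O(\lceil 2u^2/j \rceil)$ work. Splitting the sum at $j = 2u^2$ gives
$$\sum_{j=1}^{\min(t,\,2u^2)} O\!\left(\frac{u^2}{j}\right) \,+\, \sum_{j=2u^2+1}^{t} O(1) \;=\; O(u^2 \log u) \,+\, O(t),$$
with the initial $O(n \log n) \le O(u \log u)$ sorting cost absorbed. The only genuinely delicate step, namely justifying the restricted minimization, has already been handled by Lemma \ref{lem:klargest} via the Erd\H os--Graham bound in Corollary \ref{corollary:8}; everything else in the proof is routine bookkeeping.
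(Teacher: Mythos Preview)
Your proposal is correct and follows essentially the same argument as the paper: correctness of Algorithm~1 via Lemma~\ref{lem:klargest} (noting that $k=\lceil 2u^2/j\rceil$ forces $j\ge 2u^2/k$), and the running time via the Harmonic-series estimate $\sum_{j\le 2u^2} O(u^2/j) + \sum_{j>2u^2} O(1) = O(u^2\log u + t)$. Your explicit split of the sum at $j=2u^2$ is in fact slightly cleaner than the paper's one-line computation.
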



As a corollary of the above algorithm, we can also obtain an algorithm with running time sensitive to $\sigma$, the total sum of the input coin values:
%
Define the heavy coins $H$ and light coins $L$ as before, with respect to a parameter $\ell_0$ to be chosen later.
We first compute $D_L$ for the light coins by the above algorithm in $\tilde{O}(\ell_0^2 + t)$ time. Then we add the heavy coins by dynamic programming:
$D_V[j] = \min\{ D_L[j], \min_{v_i\in H} D_V[j-v_i] + 1\}$.  Since there are at most $\sigma/\ell_0$ heavy coins, this step takes $O(\sigma/\ell_0\cdot t)$ time.
The overall running time is
\[ \OO\left(\ell_0^2 \,+\, \frac{t\sigma}{\ell_0} \,+\, t\right).
\]
To balance cost, we choose $\ell_0 = (t\sigma)^{1/3}$
and obtain the time bound $\tilde{O}((t\sigma)^{2/3} + t)$.
(Again, no FFT is required.)

\begin{corollary}\label{cor:2}
The all-targets change-making problem can be solved in $\tilde{O}((t\sigma)^{2/3} + t)$ time.
\end{corollary}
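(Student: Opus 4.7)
The plan is to reduce to Theorem~\ref{thm:2} by splitting the coin set into heavy and light parts about a threshold $\ell_0$ (to be optimized at the end), just as in the $\tilde O(t^{4/3})$ algorithm, but now invoking the FFT-free Algorithm~\ref{alg1} on the light side. Let $L=\{v_i: v_i\le \ell_0\}$ and $H=\{v_i: v_i>\ell_0\}$; the crucial observation is that every coin in $H$ exceeds $\ell_0$ while their values sum to at most $\sigma$, so $|H|\le \sigma/\ell_0$.

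First I would compute $D_L[0,\ldots,t]$ by running the algorithm of Theorem~\ref{thm:2} on the instance $(L,t)$. Since every light coin has value at most $\ell_0$, the parameter $u$ for this subinstance is $\le \ell_0$, giving time $\tilde O(\ell_0^2 + t)$. Next I would fold in the heavy coins via the textbook Bellman-style recurrence
\[
D_V[j] \;=\; \min\!\left\{\, D_L[j],\; \min_{v_i\in H}\bigl(D_V[j-v_i]+1\bigr)\right\},
\]
evaluated in increasing order of $j=1,\ldots,t$. The recurrence is well-defined because each $v_i\in H$ is positive, and its correctness is the standard argument: an optimal multiset either uses no heavy coin (captured by $D_L[j]$) or uses some heavy coin $v_i$, in which case removing one copy leaves an optimal multiset for target $j-v_i$ over the full coin set $V$. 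Each entry is computed in $O(|H|)=O(\sigma/\ell_0)$ time, so this step runs in $O(t\sigma/\ell_0)$ time.

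Adding the two phases gives total time $\tilde O(\ell_0^2 + t\sigma/\ell_0 + t)$. Balancing $\ell_0^2 = t\sigma/\ell_0$ yields $\ell_0 = (t\sigma)^{1/3}$ and overall time $\tilde O((t\sigma)^{2/3}+t)$. The only subtlety I expect is purely bookkeeping: verifying that we may legitimately restrict $u$ in the invocation of Theorem~\ref{thm:2} to the maximum value in $L$ (rather than in $V$), which holds because Theorem~\ref{thm:2}'s bound depends on the maximum coin value of the input instance; and noting that the heavy-coin phase does not require FFT, so the whole procedure remains FFT-free.
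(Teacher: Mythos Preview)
Your proposal is correct and essentially identical to the paper's own argument: both split at a threshold $\ell_0$, run the $O(u^2\log u+t)$ algorithm on the light coins (so that $u\le\ell_0$), fold in the $\le \sigma/\ell_0$ heavy coins by the textbook DP recurrence, and balance at $\ell_0=(t\sigma)^{1/3}$. The extra justifications you provide (correctness of the recurrence, that $u$ refers to the light subinstance, FFT-freeness) are accurate and only add detail beyond the paper's terse version.
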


\paragraph{Remark.}
The $O(t)$ term can be eliminated in Theorem~\ref{thm:2} (and thus Corollary~\ref{cor:2}) if we are fine with an \emph{implicit} representation of the output (i.e., a structure that allows us to return the answer for any given target in constant time), since by Lemma~\ref{lemma:6}, we can first reduce the target $j$ to below $u^2$ by using some number (i.e., $\max\{\ceil{(j-u^2)/u},0\}$) of copies of the largest coin value $u$.


\section{All-Capacities Unbounded Knapsack}

We note that the algorithm in the preceding section can be extended to solve the all-capacities version of the unbounded knapsack problem, defined as follows:
\begin{problem} {\sc (All-Capacities Unbounded Knapsack)}
Given $n$ items where the $i$-th item has a positive integer
weight $w_i$ and
a positive profit $p_i$, and given an integer $t$, for each $j=0,\ldots,t$, find
the maximum total profit of a multiset of items such that the total weight is at most $j$, i.e.,
find the maximum
of $\sum_{i=1}^n m_ip_i$ subject to the constraint that $\sum_{i=1}^n m_iw_i\le j$, where $m_i\in \mathds{N}$.
\end{problem}

Like before, let $u=\max_{i=1}^n w_i$ and $\sigma=\sum_{i=1}^n w_i$.
We may assume that the weights are distinct (since if there are two items with the same weight, we may remove the one with the smaller profit).

We use the following analog to
Lemma~\ref{lem:klargest}:

\begin{lemma}
Suppose items $1,\dots,k$ have the $k$ largest profit-to-weight ratios. For any capacity value $j\ge 3u^2/k$, any optimal solution to the unbounded knapsack problem must use at least one item from $\{1,\ldots,k\}$.
\end{lemma}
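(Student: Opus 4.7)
The plan is to mirror the proof of Lemma~\ref{lem:klargest} closely, assuming for contradiction that some optimal solution $X$ for capacity $j\ge 3u^2/k$ uses no item from $\{1,\dots,k\}$. Set $d=\gcd(w_1,\dots,w_k)$ and write $r^\star=p_k/w_k$ for the $k$-th largest profit-to-weight ratio; by the choice of the top $k$ items, every item outside $\{1,\dots,k\}$ has ratio at most $r^\star$, while every item inside has ratio at least $r^\star$. The argument then splits into two cases depending on whether the total weight of $X$ is large or small.

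Order the items used by $X$ arbitrarily (with duplicates) and partition them into subsequences $X_1,\dots,X_h$ whose weights lie in $(\tfrac{2u^2}{dk}-u,\tfrac{2u^2}{dk}]$, except that the last has weight at most $\tfrac{2u^2}{dk}-u$, exactly as in the proof of Lemma~\ref{lem:klargest}. If $h>d$, the usual pigeonhole argument on the partial sums modulo $d$ produces a concatenation $Y$ of consecutive $X_{i'+1},\dots,X_{j'}$ whose total weight $W$ is divisible by $d$ and exceeds $\tfrac{2u^2}{dk}-u$, so Corollary~\ref{corollary:8} writes $W=\sum_{i=1}^k m_iw_i$ with $m_i\in\mathds{N}$. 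The key substitution of counts by profits now happens: the items of $Y$ contribute profit at most $r^\star W$ (each has ratio $\le r^\star$), while the replacement consisting of $m_i$ copies of item $i$ for $i=1,\dots,k$ contributes profit at least $r^\star W$ (each such item has ratio $\ge r^\star$). Swapping $Y$ for this replacement therefore preserves the total weight, does not decrease the profit, and introduces at least one item from $\{1,\dots,k\}$.

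If instead $h\le d$, the total weight $T$ of $X$ is at most $d\cdot\tfrac{2u^2}{dk}=2u^2/k$, so the remaining capacity is $j-T\ge u^2/k$. Since the weights are distinct positive integers bounded by $u$, we have $n\le u$ and hence $k\le u$, giving $u^2/k\ge u\ge w_1$. Appending item $1$ to $X$ therefore keeps the solution feasible while strictly increasing its profit by $p_1>0$, contradicting the optimality of $X$. This explains the $3u^2/k$ threshold (versus the $2u^2/k$ of Lemma~\ref{lem:klargest}): the extra $u^2/k$ is precisely the slack we need in order to accommodate one top-$k$ item in this small-weight regime.

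The main obstacle, compared with the change-making proof, is that in the replacement case ($h>d$) the new profit may only equal the old one in the presence of ties among profit-to-weight ratios, so this step by itself produces \emph{some} optimal solution using a top-$k$ item rather than strictly contradicting the optimality of $X$. This already suffices for the intended algorithmic use, where the DP transition for capacity $j$ may safely be restricted to $\{1,\dots,k\}$; the statement as written can be recovered under any consistent tie-breaking rule that prefers top-$k$ items, for instance by an infinitesimal perturbation of the profits that makes all ratios distinct.
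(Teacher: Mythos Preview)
Your proof is correct and follows essentially the same approach as the paper: adapt the pigeonhole/Erd\H{o}s--Graham argument of Lemma~\ref{lem:klargest}, together with the observation that an optimal knapsack solution must have total weight at least $j-u$ (the paper states this directly, whereas your Case~2 is the contrapositive---if the weight is below $2u^2/k$ then one more item fits, contradicting optimality). You are in fact more careful than the paper about ties in the profit-to-weight ratios; the paper's brief proof asserts the replacement ``would increase the total profit'' and glosses over the equality case, which as you note is harmless for the algorithmic application and can be removed by the perturbation you describe.
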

\begin{proof}
Similar to the proof of  Lemma~\ref{lem:klargest}, since replacing a subsequence with items
that have larger profit-to-weight ratios while maintaining the same total weight would increase the total profit.  One difference in the unbounded knapsack problem is that the total weight in the optimal solution may not be exactly $j$.  But it must be at least $j-u$ (otherwise, we could add one more item to get a better solution).  When $j\ge 3u^2/k$, we have $j-u\ge 2u^2/k$.
\end{proof}

The same analysis shows correctness of the following very simple algorithm, which runs in $O(u^2\log u+ t)$ time:
\begin{algorithm}
\label{alg2}
\begin{algorithmic}[1]
\State Sort the items in decreasing order of $p_i/w_i$.
\For {$j = 0,\dots,t$}
    \State Set $D[j] = \max\{0,\ \max_{1\leq i\leq \ceil{3u^2/j}:\: w_i\le j}  (D[j-w_i] + p_i)\}$.
\EndFor
\end{algorithmic}
\end{algorithm}

The $\tilde{O}((t\sigma)^{2/3} + t)$ algorithm can be extended as well.

\begin{corollary}\label{cor:knapsack}
The all-capacities unbounded knapsack problem can be solved in $O(u^2\log u + t)$ or $\tilde{O}((t\sigma)^{2/3} + t)$ time.
\end{corollary}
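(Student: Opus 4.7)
The plan is to mirror the two arguments already used for all-targets change-making (Theorem~\ref{thm:2} and Corollary~\ref{cor:2}), with the unbounded knapsack analog of Lemma~\ref{lem:klargest} (proved just above the corollary) as the key structural ingredient.

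For the $O(u^2\log u+t)$ bound I would argue the correctness of Algorithm~2 directly from the lemma: sort the items by decreasing profit-to-weight ratio, and when filling entry $D[j]$ only consider the top $\lceil 3u^2/j\rceil$ items. By the lemma, any optimal packing at capacity $j\ge 3u^2/k$ must use at least one of the $k$ items with largest ratio, so peeling off one such item yields a subproblem at capacity $j-w_i\ge 0$; the recurrence therefore explores the optimum. For $j<3u^2$ the loop trivially enumerates all items ($k=n$). The running-time analysis is the same harmonic sum as for change-making,
\[ \sum_{j=1}^{t}\Bigl(\tfrac{3u^2}{j}+1\Bigr) \;=\; O(u^2\log u+t). \]

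For the $\tilde{O}((t\sigma)^{2/3}+t)$ bound I would repeat the heavy/light split from Corollary~\ref{cor:2}. Fix a threshold $\ell_0$ and set $L=\{i:w_i\le\ell_0\}$, $H=\{i:w_i>\ell_0\}$. Since weights are distinct, $|L|=O(\ell_0)$, so the preceding algorithm applied to $L$ computes $D_L[0,\ldots,t]$ in $\tilde{O}(\ell_0^2+t)$ time. Since $|H|\le\sigma/\ell_0$, I then fold the heavy items in by the standard unbounded-knapsack recurrence, processing $j$ from $0$ to $t$:
\[ D_V[j] \;=\; \max\Bigl\{D_L[j],\; \max_{i\in H,\,w_i\le j}\bigl(D_V[j-w_i]+p_i\bigr)\Bigr\}, \]
which is correct because an optimum either uses no heavy item (captured by $D_L[j]$) or uses some heavy item $i$ at least once, in which case removing one copy leaves the subproblem $D_V[j-w_i]$ (still allowed to use item $i$ again thanks to the unbounded setting and the left-to-right scan). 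This step costs $O(t\cdot\sigma/\ell_0)$, giving total time $\tilde{O}(\ell_0^2+t\sigma/\ell_0+t)$; choosing $\ell_0=(t\sigma)^{1/3}$ balances the two nontrivial terms to $\tilde{O}((t\sigma)^{2/3}+t)$.

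The main obstacle, already handled by the unbounded-knapsack analog lemma, is that in knapsack the optimum at capacity $j$ need not use weight exactly $j$, so the ``reduce to a divisible subsequence'' pigeonhole step of Lemma~\ref{lem:klargest} must tolerate a slack of $u$; this is absorbed by strengthening the threshold from $2u^2/k$ to $3u^2/k$. Nothing else in the two arguments changes substantively: the sorting key is $p_i/w_i$ rather than the coin value, and the ``decreases the count'' contradiction becomes ``strictly increases the profit for the same total weight,'' still contradicting optimality of the putative solution that avoids the top-$k$ items.
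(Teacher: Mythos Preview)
Your proposal is correct and follows the same approach as the paper, which merely asserts that the change-making arguments for Theorem~\ref{thm:2} and Corollary~\ref{cor:2} carry over via the knapsack analog of Lemma~\ref{lem:klargest} and Algorithm~2. One small remark: the $\tilde{O}(\ell_0^2+t)$ bound for the light items follows because the \emph{maximum weight} in $L$ is at most $\ell_0$ (so the role of $u$ is played by $\ell_0$), not from $|L|=O(\ell_0)$ per se---but your conclusion is unaffected.
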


\paragraph{Remarks.}
As before, the $O(t)$ term can be eliminated with an implicit representation of the output (since by an analog to Lemma~\ref{lemma:6}, we can first reduce the capacity to below $u^2$ by using some number of copies of the item with the largest profit-to-weight ratio).
In particular, for the single-capacity version, we obtain a very
simple $O(u^2\log u)$-time algorithm.

The algorithm works even when the profits are reals but the weights are integers.  Alternatively, a variant of the algorithm works when the weights are reals but the profits are integers:  the same time bound $O(u^2\log u)$ holds but with $u=\max_{i=1}^np_i$.  Here, we recast the problem as minimizing $\sum_{i=1}^n m_iw_i$ subject to the constraint that $\sum_{i=1}^n m_ip_i\ge j$, and modify the algorithm appropriately (applying Erd\H os--Graham to the profits instead of the weights).  From the implicitly represented output, we can determine the answer for any given capacity by predecessor search.

\section{$\tilde{O}(u)$ Algorithm for Single-Target Change-Making}
In this section, we present an $\tilde{O}(u)$-time algorithm for the single-target change-making problem.  It is obtained by modifying the third algorithm in our previous paper~\cite{chan2020change}, which originally ran in $O(t\log^2 t)$ time.
In that algorithm, we first solved the decision problem:
deciding whether we can sum to $t$ using at most $m$ coins for a given value $m$.
By adding 0 to the input set of coin values, ``at most $m$'' can be changed to ``exactly $m$''.


That previous decision algorithm relies on the following partition lemma, which shows the multiset of coins $S$ can be almost evenly partitioned simultaneously in terms of cardinality \emph{and} the total value:
\begin{lemma}[Partition Lemma]
Suppose $S$ is a multiset with $|S|=m$ and has sum $\sigma(S)\triangleq \sum_{s\in S}s=t$. If $m$ is odd, then there exists a partition of $S$ into three parts $S_1$, $S_2$ and a singleton $\{s_0\}$, such that $|S_1|=|S_2|=\frac{m-1}{2}$ and $\sigma(S_1),\sigma(S_2)\leq \frac t2$.

If $m$ is even, then there exists a partition of $S$ into three parts $S_1$, $S_2$ and two elements $\{s_0,s_1\}$, such that $|S_1|=|S_2|=\frac m2-1$, and $\sigma(S_1),\sigma(S_2)\leq \frac t2$.
\end{lemma}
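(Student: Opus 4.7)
The plan is to sort the elements of $S$ in decreasing order and then distribute them in consecutive pairs, at each step assigning the larger element of the current pair to whichever of $S_1$ or $S_2$ has the smaller running sum and the smaller element to the other side. Write the sorted elements as $a_1 \geq a_2 \geq \cdots \geq a_m$.

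For the odd case $m = 2k+1$, I would set $s_0 := a_1$ and group the remaining elements into $k$ consecutive pairs $(a_2,a_3),(a_4,a_5),\ldots,(a_{2k},a_{2k+1})$. Processing the pairs in order, I maintain the invariant $|\sigma(S_1)-\sigma(S_2)| \leq a_2$ after every placement. The base case is clear, since after the first placement the imbalance is $a_2-a_3 \leq a_2$. For the inductive step, let $A \geq 0$ be the pre-placement signed imbalance (WLOG, by symmetry) and set $B := a_{2i} - a_{2i+1} \geq 0$ for the current pair; putting $a_{2i}$ on the lighter side and $a_{2i+1}$ on the other produces new absolute imbalance $|A-B| \leq \max(A,B) \leq a_2$, since $A \leq a_2$ by induction and $B \leq a_{2i} \leq a_2$ by sortedness. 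After all $k$ pairs are placed, $|\sigma(S_1)-\sigma(S_2)| \leq a_2 \leq a_1 = s_0$ and $\sigma(S_1)+\sigma(S_2) = t-s_0$, giving $\max(\sigma(S_1),\sigma(S_2)) \leq t/2$; the cardinalities $|S_1|=|S_2|=k=(m-1)/2$ are automatic. The even case $m=2k$ is entirely analogous: set aside $s_0 := a_1$ and $s_1 := a_2$, run the same process on the $k-1$ pairs $(a_3,a_4),\ldots,(a_{2k-1},a_{2k})$ under the invariant $|\sigma(S_1)-\sigma(S_2)| \leq a_3$, and then conclude $\max(\sigma(S_1),\sigma(S_2)) \leq (t-a_1-a_2+a_3)/2 \leq t/2$ using $a_1 \geq a_3$.

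The main obstacle is choosing the right invariant. A more naive ``imbalance after the $i$-th placement $\leq a_{2i}$'' fails to propagate, because the sequence $a_{2i}$ is decreasing and a later small pair has no ability to drive the accumulated imbalance down to its own magnitude. The trick is to bound the imbalance uniformly by the largest element appearing in any of the pairs actually being distributed (namely $a_2$, respectively $a_3$), since this is always an upper bound on the current pair's larger element, making the inductive step collapse to the elementary inequality $|A-B| \leq \max(A,B)$ for nonnegative $A,B$.
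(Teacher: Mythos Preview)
Your proof is correct but takes a genuinely different route from the paper's.  The paper fixes an \emph{arbitrary} ordering $s_1,\ldots,s_m$ and runs a sliding-window argument: letting $\hat S_1$ and $\hat S_2$ be the first and last $\frac{m-1}{2}$ elements, if $\sigma(\hat S_2)\le t/2$ we are done; otherwise slide a window $W$ of size $\frac{m-1}{2}$ from $\hat S_1$ to $\hat S_2$ and stop at the first position where $\sigma(W)\le t/2$ but the next window $W'$ has $\sigma(W')>t/2$, then take $S_1=W$, $s_0$ the unique element of $W'\setminus W$, and $S_2$ the complement.  No sorting, no invariant maintenance---just a discrete intermediate-value step.

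Your approach instead sorts, sets aside the largest one or two elements, and greedily balances the remaining consecutive pairs under the uniform invariant $|\sigma(S_1)-\sigma(S_2)|\le a_2$ (resp.\ $a_3$).  The key inequality $|A-B|\le\max(A,B)$ for $A,B\ge 0$ is clean, and the final arithmetic $\max(\sigma(S_1),\sigma(S_2))=\tfrac12\bigl((\sigma(S_1)+\sigma(S_2))+|\sigma(S_1)-\sigma(S_2)|\bigr)\le t/2$ is correct in both parities.  Your argument is slightly longer and requires sorting, but it is fully constructive and the invariant is easy to verify; the paper's sliding-window proof is shorter and order-agnostic but less explicit about which element becomes $s_0$.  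Either way the lemma follows.
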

Our previous paper \cite{chan2020change} provided a short proof for the even $m$ case, and here for the benefit of the reader, we restate a self-contained proof for the odd case.
\begin{proof}
Let $s_1,s_2,\ldots,s_m$ be the elements of $S$ in an arbitrary order.
Let $\hat{S}_1$ be the set  of the first $\frac{m-1}{2}$ elements,
and let $\hat{S}_2$  be the set of the last $\frac{m-1}{2}$ elements.
W.l.o.g., assume that  $\sigma(\hat{S}_1)\leq \sigma(\hat{S}_2)$ (for otherwise we can swap these two parts).
If $\sigma(\hat{S}_2)
\leq \frac{t}{2}$ then we can simply set $S_1=\hat{S}_1$, $S_2=\hat{S}_2$, and $s_0$ be the $\frac{m+1}{2}$-th element. Otherwise maintain a sliding window $W$ containing exactly $\frac {m-1}{2}$ consecutive elements of $s_1,\ldots,s_m$.  Initially, $W=\hat{S}_1$ and $\sigma(W)\le \frac t2$.  At the end, $W=\hat{S}_2$ and $\sigma(W)>\frac t2$.
Thus, at some moment in time, we must have $\sigma(W)\le \frac t2$ but $\sigma(W')>\frac t2$, where $W'$ denotes the next window after $W$.
We let $S_1=W$, $s_0$ be the unique element in $W'\setminus W$, and
$S_2=S\setminus (W\cup\{s_0\})$.  Since $S_2\subseteq S\setminus W'$, we have
$\sigma(S_2)<\frac t2$.
\end{proof}

Notice that since the maximum coin value is $u$, we also have $\sigma(S_1),\sigma(S_2)\geq \frac{t}{2}-2u$ (as we take out one or two coins).

The Partition Lemma suggests
a simple recursive algorithm
to compute $C_V^{(m)}[0,\ldots,t]$:
we just take the first $t+1$ entries of
\[
\left\{
\begin{array}{ll}
C_V^{(\frac{m-1}{2})}[0,\ldots,\frac t2]\circ
C_V^{(\frac{m-1}{2})}[0,\ldots,\frac t2]\circ
C_V^{(1)}[0,\ldots,t]
& \mbox{if $m$ is odd},\\
C_V^{(\frac{m}{2}-1)}[0,\ldots,\frac t2]\circ
C_V^{(\frac{m}{2}-1)}[0,\ldots,\frac t2]\circ
C_V^{(1)}[0,\ldots,t]\circ C_V^{(1)}[0,\ldots,t]
& \mbox{if $m$ is even}.
\end{array}
\right.
\]
That was essentially our previous algorithm~\cite{chan2020change}.

We describe a more efficient recursive algorithm to compute a smaller subarray
$C_V^{(m)}[t-4u,\dots,t]$:
we just take the relevant entries of
\[
\left\{
\begin{array}{l}
C_V^{(\frac{m-1}{2})}[\frac{t-4u}{2}-2u,\dots,\frac{t}{2}]\circ C_V^{(\frac{m-1}{2})}[\frac{t-4u}{2}-2u,\dots,\frac{t}{2}]\circ C_V^{(1)}[0,\dots,u]\\
\qquad\qquad\qquad\qquad\qquad\qquad\qquad\qquad\qquad\qquad\qquad\qquad\qquad\qquad\mbox{if $m$ is odd,}\\[2pt]
C_V^{(\frac{m}{2}-1)}[\frac{t-4u}{2}-2u,\dots,\frac{t}{2}]\circ C_V^{(\frac{m}{2}-1)}[\frac{t-4u}{2}-2u,\dots,\frac{t}{2}]\circ C_V^{(1)}[0,\dots,u]\circ C_V^{(1)}[0,\dots,u]\\
\qquad\qquad\qquad\qquad\qquad\qquad\qquad\qquad\qquad\qquad\qquad\qquad\qquad\qquad\mbox{if $m$ is even.}
\end{array}
\right.
\]
Each of the above Boolean convolutions is done to arrays of size $O(u)$ (after shifting indices), and thus takes $O(u\log u)$ time.  The subarrays $C_V^{(\frac{m-1}{2})}[\frac{t-4u}{2}-2u,\dots,\frac{t}{2}]
=C_V^{(\frac{m-1}{2})}[\frac{t}{2}-4u,\dots,\frac{t}{2}]$ and
$C_V^{(\frac{m}{2}-1)}[\frac{t-4u}{2}-2u,\dots,\frac{t}{2}] = C_V^{(\frac{m}{2}-1)}[\frac{t}{2}-4u,\dots,\frac{t}{2}]$ can be computed by recursion.
Thus, the running time satisfies the recurrence
\[ T(m,t) = T(\floor{\tfrac{m-1}{2}},\tfrac t2) + O(u\log u),
\]
which solves to $T(m,t)=O(u\log u\log t)$.

The decision problem can now be solved by inspecting the entry $C_V^{(m)}[t]$.  We can
find the optimal number of coins by binary search with $O(\log t)$ calls to the decision algorithm.
By Lemma \ref{lemma:6}, we can first reduce $t$ to below $u^2$ by repeatedly using the largest coin value. Therefore, the total running time is $O(u\log u\log^2 t)\le O(u\log^3 u)$.

\begin{theorem}\label{thm:single:target}
The single-target change-making problem can be solved in $O(u\log^3 u)$ time.
\end{theorem}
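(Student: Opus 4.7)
The plan is to follow the three-stage strategy already set up in the exposition: (i) invoke Lemma~\ref{lemma:6} to reduce the target from $t$ to some $t'<u^2$ by peeling off $\max\{\lceil(t-u^2)/u\rceil,0\}$ copies of the largest coin; (ii) solve the resulting decision problem ``can we reach $t'$ using exactly $m$ coins?'' for a given $m$ via the recursive Boolean-convolution scheme described above; (iii) binary search over $m\in\{0,\dots,t'\}$ to recover the minimum number of coins. After (i), every remaining parameter is $O(u^2)$, so $\log t'=O(\log u)$, which is exactly what I need to convert the $O(u\log u\log^2 t)$ bound from (ii)+(iii) into the advertised $O(u\log^3 u)$.

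The core step is verifying that the recursion
\[
T(m,t) \;=\; T(\lfloor (m-1)/2\rfloor,t/2) + O(u\log u)
\]
is indeed legitimate for computing only the $O(u)$-length tail $C_V^{(m)}[t-4u,\dots,t]$. I would prove this by arguing that if $S$ is any multiset of $m$ coins with $\sigma(S)=t^*\in[t-4u,t]$, then by the Partition Lemma there exist submultisets $S_1,S_2$ of size $\lfloor(m-1)/2\rfloor$ (resp.\ $m/2-1$) with $\sigma(S_i)\in[\tfrac{t^*}{2}-2u,\tfrac{t^*}{2}]$, plus one or two leftover coins of values in $[0,u]$. Since $t^*\in[t-4u,t]$, both $\sigma(S_i)$'s lie in $[\tfrac{t}{2}-4u,\tfrac{t}{2}]$, and the leftover coins contribute an $O(u)$-length factor. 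Hence the tail we want is exactly the (relevant entries of the) Boolean convolution of the recursively computed tails $C_V^{(\cdot)}[\tfrac{t}{2}-4u,\dots,\tfrac{t}{2}]$ with the short array $C_V^{(1)}[0,\dots,u]$ (once or twice). Each such convolution operates on arrays of length $O(u)$ and therefore costs $O(u\log u)$ via FFT, yielding the claimed recurrence.

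Unwinding the recurrence through $O(\log t')=O(\log u)$ levels gives a decision-algorithm running time of $O(u\log u\cdot \log t')=O(u\log^2 u)$, and then the outer binary search on $m$ multiplies by another $O(\log t')=O(\log u)$, producing $O(u\log^3 u)$ overall. The only delicate point I anticipate is bookkeeping the window offsets so that the entries needed at each recursive level are genuinely a length-$O(u)$ interval (and not a spread-out set), and making sure the argument still goes through in the even-$m$ case where two leftover coins are extracted instead of one—this is why the tail is taken of width $4u$ rather than $2u$. Once that window analysis is correct, the rest is a mechanical combination of Lemma~\ref{lemma:6} (to cap $t$ at $u^2$) and the Partition Lemma (to justify the recursion), and the time bound follows immediately.
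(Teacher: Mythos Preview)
Your proposal is correct and follows essentially the same approach as the paper: reduce $t$ below $u^2$ via Lemma~\ref{lemma:6}, use the Partition Lemma to justify the recursion $T(m,t)=T(\lfloor(m-1)/2\rfloor,t/2)+O(u\log u)$ for computing the length-$O(u)$ tail $C_V^{(m)}[t-4u,\dots,t]$, and wrap with a binary search over $m$. Your remarks about why the window width is $4u$ and how the even-$m$ case is handled match the paper's reasoning exactly.
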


\paragraph{Remarks.} The above algorithm shares some similarity with the $\OO(u^2)$ algorithm by
Axiotis and Tzamos~\cite{AxiotisT19} for unbounded knapsack, which also involves logarithmically many convolutions on subarrays of size $O(u)$, except that they used $(\min,+)$-convolutions and a more naive parititioning that approximately halves $t$, but not $m$.  In contrast, the above Partition Lemma is crucial to our faster algorithm for change-making.

There is also some similarity with an algorithm by Jansen and Rohwedder \cite{JansenR19}, who studied a more general problem of integer programming with a constant number of constraints.
Their algorithm also aims to simultaneously divide the target and the cardinality in half, by using more advanced techniques, namely, ``Steinitz Lemma''.



\section{$\OO(nu)$ Algorithm for Single-Capacity Unbounded Knapsack}

In this section, we revisit the standard (single-capacity) version of the unbounded knapsack problem and present a new $\OO(nu)$-time algorithm (recall that $u=\max_i w_i$).
This algorithm is simple (no FFT needed), and is based on the following combinatorial lemma, which is obtained by another pigeonhole argument:

\begin{lemma}
For the unbounded knapsack problem for a given capacity $j$,
there exists an optimal solution that uses at most $\log j$ different types of items.

In particular, in some optimal solution, there exists an item $i$ that is used at least
$\frac{j}{w_i\log j}$
times.
\end{lemma}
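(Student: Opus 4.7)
My approach is a pigeonhole ``swap-and-reduce'' argument, similar in spirit to the proof of Lemma~\ref{lemma:6}. Among all optimal solutions, fix one that minimizes the number $k$ of distinct item types used (so every used type has multiplicity $m_i \ge 1$). Since $\sum_i m_i w_i \le j$, the distinct weights automatically satisfy $\sum_{i=1}^k w_i \le j$. The plan is to assume $k$ exceeds $\log_2 j$ (with a small additive slack, absorbed into the $\log j$ bound) and produce another optimal solution with strictly fewer distinct types, contradicting minimality.

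First, consider all $2^k$ subsets of the used types. Each subset $S$ has total weight $w(S) \in \{0, 1, \ldots, j\}$, so when $2^k > j+1$ the pigeonhole principle yields two distinct subsets $S_1 \ne S_2$ with $w(S_1) = w(S_2)$. By optimality of the chosen solution, $p(S_1) = p(S_2)$ as well: otherwise, substituting the higher-profit subset in place of the lower-profit one (feasible because every subtracted item currently has positive multiplicity) would strictly improve the objective while keeping the weight unchanged.

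Now let $A = S_1 \setminus S_2$ and $B = S_2 \setminus S_1$. These are disjoint and both non-empty, since $S_1 \ne S_2$, all weights are positive, and $w(S_1) = w(S_2)$ rules out one being a proper subset of the other. We have $w(A) = w(B)$ and $p(A) = p(B)$. Set $r = \min_{i \in B} m_i \ge 1$, and perform the swap that removes $r$ copies of each item in $B$ and adds $r$ copies of each item in $A$. Weight and profit are preserved, all multiplicities remain non-negative, and the item $i \in B$ attaining $m_i = r$ is eliminated entirely---yielding an optimal solution with strictly fewer distinct types. This contradicts the minimality of $k$, so $k = O(\log j)$.

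The ``in particular'' clause follows from an averaging argument applied to the resulting optimal solution: if $W = \sum_i m_i w_i$ is the total weight used, then some index $i$ must satisfy $m_i w_i \ge W / \log j$. In the non-trivial regime one may further assume $W = \Omega(j)$ (otherwise no additional item would fit, forcing every $w_i > j - W$), yielding $m_i \ge j/(w_i \log j)$ up to constants. The main subtlety of the plan is choosing the swap magnitude correctly: the choice $r = \min_{i \in B} m_i$ is precisely what simultaneously preserves feasibility and guarantees a strict reduction in the distinct-type count, which is what makes the contradiction argument go through.
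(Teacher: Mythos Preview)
Your proof is correct and follows essentially the same pigeonhole-and-swap argument as the paper: pick an optimal solution minimizing the number of distinct types, find two type-subsets of equal weight by pigeonhole, and swap to eliminate a type. Your explicit choice of swap multiplicity $r=\min_{i\in B} m_i$ is in fact more careful than the paper's terser version (which asserts the swap reduces the type count without spelling out the repeated-swap or $r$-fold-swap step), and your hedging on the ``in particular'' clause (``up to constants,'' $W=\Omega(j)$) matches the informal precision of the paper's own one-line justification.
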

\begin{proof}
Consider an optimal solution that uses the minimum number of types of items.  Let $S$ be the set of items used in this solution, excluding multiplicities.
If $|S|>\log j$, by the pigeonhole principle there must exist two different subsets $S_1$ and $S_2$ of $S$ with the same total weight, multiplicities included  (since there are $2^{|S|}$ subsets and $j$ integers between 0 and $j-1$).
We can replace the items in $S_2\setminus S_1$ with $S_1\setminus S_2$, or vice versa (depending which of the two has smaller total value), and get a new solution that has the same total weight but has larger or equal total value.
And if it has equal total value, the new solution uses a smaller number of types of items (since $S_2\setminus S_1$ and $S_1\setminus S_2$ are nonempty): a contradiction.

Thus, $|S|\le\log j$.  This also implies that some item contributes  at least $\frac{j}{\log j}$ to the total weight.
\end{proof}

Let $b:=\ceil{\log t_0}$, where $t_0$ is an upper bound on the capacity.
Let $D[j]$ be the maximum profit for the unbounded knapsack problem with capacity $j.$ For $t<t_0-u$, letting $\hat{t}:=\lceil(1-\frac1b)t\rceil$, we can compute the subarray $D[t,\ldots,t+u]$ from the subarray $D[\hat{t},\ldots, \hat{t}+u]$ in
$O(nu)$ time, using the following recursive formula for each entry $j$ in $[t,t+u]$:
\[ D[j]\: =\: \max\left\{0, \ \max_{i=1}^n (D[j-w_ix_{ij}] + p_ix_{ij})\right\} \qquad\mbox{where }
x_{ij} := \lceil\tfrac{j-(\hat{t}+u)}{w_i}\rceil.
\]

Note that $x_{ij}\leq \lceil\tfrac{j}{w_ib}\rceil$,
because $j-(\hat{t}+u)\le \tfrac{j}{b}$, i.e.,
$(1-\frac1b)j \le \hat{t}+u$ (which is obvious since
$\hat{t}=\lceil (1-\frac1b)t\rceil$).
Thus, the correctness of the formula follows from the above lemma.
Also note that
$j-w_ix_{ij}\in
[\hat{t},\hat{t}+u]$. The latter subarray $D[\hat{t},\ldots, \hat{t}+u]$ can be computed recursively.

Let $T(t)$ denote the time for computing $D[t,\ldots,t+u]$.  We thus obtain the following recurrence:
\[ T(t) \:=\: T(\lceil(1-\tfrac1{b})t\rceil) \,+\, O(nu).
\]
For the base case, we have $T(b)=O(nu)$
by the standard dynamic programming algorithm (which computes $D[0,\ldots,j]$ in $O(nj)$ time).
The number of levels of recursion is $O(b\log t)$.
So,
$T(t) = O(bnu\log t)=O(nu\log t_0\log t)$.  We can set $t_0=(t+u)^{O(1)}$.
As before, we can initially reduce the capacity $t$ to below $u^2$ by repeatedly using  the item with the largest profit-to-weight ratio.  This yields the following result:

\begin{theorem}\label{thm:knapsack}
The single-capacity unbounded knapsack problem can be solved in $O(nu\log^2 u)$ time.%
\footnote{In the preliminary version of the paper~\cite{ChanH20},
we claimed a slightly weaker $O(nu\log^3 u)$ time bound, due to some small differences in the algorithm.}
\end{theorem}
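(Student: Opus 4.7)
The plan is to combine the pigeonhole lemma (which guarantees that in some optimal solution at capacity $j$, some item $i$ is used at least $j/(w_i\log j)$ times) with a recursive ``window-shrinking'' strategy that multiplies the capacity by a factor of $1-1/b$ at each step. I would set $b := \lceil\log t_0\rceil$ for a polynomial upper bound $t_0$ on the capacities I care about, define $\hat{t} := \lceil(1-1/b)t\rceil$, and compute the length-$u$ window $D[t,\ldots,t+u]$ from $D[\hat{t},\ldots,\hat{t}+u]$ via
\[
D[j] \:=\: \max\!\left\{0,\ \max_{i}\, (D[j - w_i x_{ij}] + p_i x_{ij})\right\}, \qquad x_{ij} := \lceil (j - (\hat{t}+u))/w_i\rceil.
\]
A short calculation checks that $j - w_i x_{ij} \in [\hat{t},\hat{t}+u]$, so every lookup lands inside the previously computed window and each level costs $O(nu)$.

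The main obstacle is verifying correctness. Here I would first show $x_{ij}\le\lceil j/(w_i b)\rceil$, which is equivalent to $(1-1/b)j\le\hat{t}+u$; this holds because $j-t\le u$ implies $(1-1/b)(j-t)\le u$, while $\hat{t}\ge(1-1/b)t$. Since $b\ge\log j$, this sharpens to $x_{ij}\le\lceil j/(w_i\log j)\rceil$, so the combinatorial lemma guarantees that some item $i$ appears at least $x_{ij}$ times in some optimal solution at capacity $j$. Pulling out exactly $x_{ij}$ copies of that item leaves an optimal subproblem at capacity $j-w_i x_{ij}\in[\hat{t},\hat{t}+u]$, whose value is already recorded, and the recurrence reads it off.

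The running-time analysis then solves $T(t)=T(\lceil(1-1/b)t\rceil)+O(nu)$. Because each level shrinks the capacity by a factor of $1-1/b$, the recursion reaches the base capacity $b$ after $O(b\log t_0)$ levels, where the base case $T(b)=O(nu)$ is handled by the textbook $O(nj)$-time dynamic program on a capacity of size $O(u)$. Thus the window at capacity $t$ is computed in $O(bnu\log t_0)=O(nu\log^2 t_0)$ time.

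Finally, to replace $\log t$ by $\log u$ in the polylog, I would preprocess the instance via the unbounded-knapsack analog of Lemma~\ref{lemma:6}: while the input capacity exceeds $u^2$, repeatedly subtract off copies of the item with the largest profit-to-weight ratio (which the lemma forces into any optimal solution). This reduces the working capacity to $O(u^2)$, allowing $t_0=O(u^2)$ and hence $b=O(\log u)$, for a final running time of $O(nu\log^2 u)$.
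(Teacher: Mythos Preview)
Your proposal is correct and follows essentially the same approach as the paper: the same combinatorial lemma, the same window recurrence $D[j]=\max\{0,\max_i(D[j-w_ix_{ij}]+p_ix_{ij})\}$ with $x_{ij}=\lceil(j-(\hat t+u))/w_i\rceil$, the same verification that $x_{ij}\le\lceil j/(w_ib)\rceil$ via $(1-1/b)j\le\hat t+u$, the same recursion depth $O(b\log t)$, and the same final reduction of $t$ below $u^2$ using the item of largest profit-to-weight ratio. The paper's write-up is slightly terser in places, but there is no substantive difference.
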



\section{Minimum Word Break}
Bringmann, Gr\o nlund, and Larsen~\cite{bringmann2017dichotomy} studied the decision version of the word break problem, and gave an algorithm with $\tilde{O}(nm^{1/3}+m)$ running time (with a matching conditional lower bound for combinatorial algorithms).

We consider the optimization version of the problem (with unit weight), defined as follows:

\begin{problem} {\sc (Minimum Word Break)}
Given a text string $s$ with length $n$ and a dictionary $D=\{d_1,\dots,d_k\}$ with total length $m$, find the minimum number $t^*$ such that $s$ can be split into $t^*$ words in $D$ (duplicates are allowed).
\end{problem}

The single-target change-making problem can be viewed as a special case of this problem, by representing each coin with value $v_i$ as a string with length $v_i$ over a unary alphabet.


In this section, we show that Bringmann et al.'s algorithm can be modified to solve the minimum word break problem
without increasing the running time (ignoring polylogarithmic factors), by using our Lemma~\ref{lemma:1} for $(\min,+)$-convolution.



\paragraph{Previous algorithm.}
We begin with a sketch of Bringmann et al.'s previous algorithm, which actually
solves an extension of the problem: compute a Boolean array $S[1,\ldots,n]$, where
$S[i]=1$ iff the prefix $s[1..i]$ can be broken into words in $D$.

For each $q\leq n$ being a power of $2$, let $D_q$ be the set of all strings in $D$ with length between $q$ and $2q-1$.  Bringmann et al.~\cite{bringmann2017dichotomy} introduced the following subproblem (which they called ``Jump Query''):

\begin{problem}\label{prob:jump}
Given $q\le n$ being a power of 2, an index $x\le n$, and a Boolean array $S[x-2q+1,\ldots,x]$,
compute a new Boolean array $S'[x+1,\ldots,x+q]$ where
$S'[i]=1$ iff there exists $j$ such that
$S[j-1]=1$ and $s[j..i]$ is a word in $D_q$.
\end{problem}

Bringmann et al.\ observed that the original problem reduces
to $O(n/q)$ instances of Problem~\ref{prob:jump} with parameter~$q$, over all $q$'s that are powers of 2.
(In the special case when all strings in $D$ have roughly the same length in $[q,2q)$,
i.e., $D=D_q$,
the observation is easy to see: we can generate the array $S[1,\ldots,n]$ from left to right, and solve an instance of Problem~\ref{prob:jump}
for every index $x$ divisible by~$q$.  In the general case, we run these processes for all $q$
simultaneously, and take the element-wise OR of the outputs, as we proceed from left to right.)

To solve Problem~\ref{prob:jump}, Bringmann et al.'s approach is to build a trie $\mathcal{T}_q$ for $D_q^{\mathrm{rev}}$, the reverse of the strings in $D_q$. The nodes in the trie $\mathcal{T}_q$ that spell the strings in $D_q^{\mathrm{rev}}$ are \emph{marked}.
(Here, a node $v$ \emph{spells} the string formed
by concatenating the symbols on the path from the root to~$v$.)
As a first step, we generate a \emph{maximal} collection $\mathcal{B}$ of node-disjoint downward paths in $\mathcal{T}_q$, satisfying the property that
each path $B\in \mathcal{B}$ contains exactly $\lambda_q$ marked nodes, where $\lambda_q$ is a parameter to be set later.
(The construction of $\mathcal{B}$ is simple
and involves just a depth-first search and a counter; see~\cite[Lemma~5]{bringmann2017dichotomy}.)
The size of $\mathcal{B}$ can be bounded by $|\mathcal{B}|\leq \frac{m}{q\cdot \lambda_q}$, since there are only $|D_q^{\mathrm{rev}}|\leq m/q$ marked nodes in the trie $\mathcal{T}_q$, and each path $B\in \mathcal{B}$ contains exactly $\lambda_q$ marked nodes. (Note that $|\mathcal{B}|=0$ if $\lambda_q>m/q$.)

To compute $S'[i]$ for a given $i\in\{x+1,\ldots,x+q\}$, we want to decide whether there exists an index $j^*$ such that
$S[j^*-1]=1$ and $s[j^*..i]$ is a word in $D_q$. To this end, we first find the node $v$ in $\mathcal{T}_q$ spelling the longest prefix of $s[1..i]^{\mathrm{rev}}$ that is in $D_q^{\mathrm{rev}}$.  (The node $v$ can be found quickly using suffix trees; see
\cite[Lemma~4]{bringmann2017dichotomy}.)
In order for $s[j^*..i]$ to be a word in $D_q$, the node $u^*$
spelling $s[j^*..i]^{\mathrm{rev}}$---which is the $(i-j^*+2)$-th node on the
path from the root to $v$---must be marked.
To search for $u^*$ (and thus $j^*$), starting from $v$, we repeatedly visit the next lowest marked ancestor in $\mathcal{T}_q$ (and check whether $S[j^*-1]=1$ for the corresponding $j^*$), until we reach the top marked node $r_B$ of some path $B\in \mathcal{B}$, or we reach the root. This takes at most $2\lambda_q$ steps by maximality of $\mathcal{B}$, and each step takes $O(1)$ time.

It remains to search for $u^*$ among all marked nodes on the path from $r_B$ to the root. Let $S_B$ be a Boolean array where $S_B[k]=1$ iff the $k$-th node on the path from the root to $r_B$ is marked. We precompute the Boolean convolution
between $S[x-2q+1,\ldots,x]$ and $S_B[q+1,\ldots,2q]$.
We want to decide the existence of an index $j^*$ with $S[j^*-1]=1$ and $S_B[i-j^*+2]=1$.
Thus, the answer $S'[i]$ can be determined by examining the entry $(S\circ S_B)[i+1]$ of the convolution.

The total cost of precomputing the above $|\mathcal{B}|$ Boolean convolutions on $O(q)$-sized arrays is
\[ O(|\mathcal{B}|\cdot q\log q) \ =\ \OO\left(\frac{m}{q\cdot \lambda_q} q\right) \ =\ \OO\left(\frac m{\lambda_q}\right).
\]
In addition, we spend $O(\lambda_q)$ time for each $i\in\{x+1,\ldots,x+q\}$; the total additional cost is $O(q\cdot \lambda_q)$.  The total time
is
\[ O\left(q\lambda_q + \frac m{\lambda_q}\right).
\]
To balance cost, we choose $\lambda_q=\sqrt{m/q}$, and as a result, Problem~\ref{prob:jump} can be solved in $\OO(\sqrt{mq})$ time.

Bringmann et al.\ also noted a more naive  $\OO(q^2)$-time
algorithm for Problem~\ref{prob:jump} (which we omit since we will not need it).
So the final
running time for the word break problem is
\[ \OO\left( \sum_{q=2^\ell,~\ell\leq \log n} \frac nq\cdot \min\{\sqrt{mq},\, q^2\} \right) \ =\  \OO(n m^{1/3})
\]
(as the largest term occurs when $q$ is near $m^{1/3}$),
plus $\OO(n+m)$ for preprocessing.

\paragraph{Modified algorithm and analysis.}
We now modify Bringmann et al.'s algorithm to solve the minimum word break problem.
Problem~\ref{prob:jump} is changed to the following subproblem:
\begin{problem}\label{prob:jump2}
Given $q\le n$ being a power of 2, an index $x\le n$, and an array of numbers $S[x-2q+1,\ldots,x]$,
compute a new array $S'[x+1,\ldots,x+q]$ where
$S'[i]=\min\{ S[j-1]+1: \mbox{$s[j..i]$ is a word in $D_q$}\}$.
\end{problem}

The minimum word break problem reduces to Problem~\ref{prob:jump2} like before (taking element-wise minimum instead of OR).
We solve Problem~\ref{prob:jump2} like before, except that we
take the $(\min,+)$-convolution $S\star S_B$ instead of Boolean convolution
(the entries of $S_B$ are now in $\{1,\infty\}$ instead of $\{0,1\}$).
For each index $i\in\{x+1,\dots,x+q\}$, we are interested in a specific entry $(S\star S_B)[i+1]$ for one specific path $B\in \mathcal{B}$. Equivalently, for each path $B\in \mathcal{B}$, we are only interested in $q_B$ entries in the output array $S\star S_B$,
for some $q_B$'s with $\sum_{B\in \mathcal{B}}q_B=q$.  We use the output-sensitive part of Lemma~\ref{lemma:1} to compute these $(\min,+)$-convolutions on $O(q)$-sized arrays (since $S_B$ is binary).  Thus, we can perform the $(\min,+)$-convolution for a path $B$
in $\OO(q\sqrt{q_B})$ time.
By the Cauchy--Schwarz inequality, the sum of the cost over all paths $B\in\mathcal{B}$ is
\[ \tilde{O}\left(\sum_{B\in \mathcal{B}}q\sqrt{q_B}\right)\,=\,
\tilde{O}\left(
q\sqrt{q|\cal B|}
\right)\,=\,
\tilde{O}\left(q\sqrt{\frac{m}{\lambda_q}}\right).
\]
In addition, we spend $O(\lambda_q)$ time for each $i\in\{x+1,\ldots,x+q\}$; the total additional cost is $O(q\cdot \lambda_q)$.
The total time is
\[ \tilde{O}\left(q \lambda_q \,+\, q\sqrt{\frac{m}{\lambda_q}}\right).
\]
To balance cost, we choose $\lambda_q=m^{1/3}$, and as a result, Problem~\ref{prob:jump2} can be solved in $\tilde{O}(qm^{1/3})$ time.

So the final running time for the minimum word break problem is
%
$$\tilde{O}\left(\sum_{q=2^\ell,~\ell\leq \log n}\frac{n}{q}\cdot qm^{1/3}\right)
\,=\, \OO(nm^{1/3}),$$
plus $\OO(n+m)$ for preprocessing, which luckily gives the same result as Bringmann et al.'s previous algorithm.


\ignore{

\newcommand{\TIMOTHY}[1]{{\color{red} (( #1 ))}}

\paragraph{Algorithm.} We begin with redescribing the common parts that we share with Bringmann et al.'s algorithm, and then explain the difference.
\TIMOTHY{replace first sentence... should try to be self-contained...}
For each $q\leq n$ being a power of $2$, let $D_q$ be a set containing all strings in $D$ with length between $q$ and $2q-1$. The idea is to consider these strings with roughly the same length at the same time. Bringmann et al.\ build a trie $\mathcal{T}_q$ for $D_q^{\mathrm{rev}}$, the reverse of the strings in $D_q$. The nodes in $\mathcal{T}_q$ spelling $D_q^{\mathrm{rev}}$ are marked. They provided a depth-first search algorithm with linear running time that can select a maximal set $\mathcal{B}$ of disjoint downward paths in $\mathcal{T}_q$, such that each path $B\in \mathcal{B}$ contains exactly $\lambda_q$ marked nodes (where $\lambda_q$ is a parameter to be set later), and begins and ends at marked nodes. The size of $\mathcal{B}$ can be bounded by $|\mathcal{B}|\leq \frac{m}{q\cdot \lambda_q}$, since there are only $|D_q^{\mathrm{rev}}|\leq \frac{m}{q}$ marked nodes in the trie $\mathcal{T}_q$, and each path $B\in \mathcal{B}$ contains exactly $\lambda_q$ marked nodes. (Note that $|\mathcal{B}|=0$ if $\lambda_q>m/q$.)

We then modify the ``jump-query'' subroutine in Bringmann et al.'s algorithm, to solve the optimization version of the problem. Let $S$ be an array where the $k$-th entry $S[k]$ is an integer value that denotes the minimum number of words in $D$ that the prefix $s[1..k]$ of $s$ can be split into. Suppose we have already computed $S[x-2q+2,\dots,x]$. To compute $S[i]$ where $x+1\leq i\leq x+q$, we want to find the minimum $S[j-1]$ among all indices $j\leq i$ such that $s[j..i]$ is a word in $D_q$, since any such $j$ implies that $s[1..i]$ can be split into $S[j-1]+1$ words in $D$ (where the last word in the
partition has length between $q$ and $2q-1$). To find this minimum value, we first precompute the minimum $j^*$ such that $s[j^*..i]$ is a suffix of a word in $D_q$. This step takes $O(n\log m+m)$ time for all powers of two $q$ and all indices $i$, by using suffix trees. $s[j^*..i]$ must be spelled by a node $v$ in $\mathcal{T}_q$. If there exists $j$ such that $s[j..i]$ is a word in $D_q$, $s[j..i]$ must be spelled by a marked node $u$ on the path from $v$ to the root. To find the minimum $S[j-1]$ among all such $u$, starting from $v$, we repeatedly visit the next lowest marked ancestor in the trie $\mathcal{T}_q$, until we reach the topmost node $r_B$ of some path $B\in \mathcal{B}$, or we reach the root. This takes at most $\lambda_q$ steps by maximality of $\mathcal{B}$, and each step takes $O(1)$ time.

The remaining task is to find the minimum corresponding $S[j-1]$ among all marked nodes on the path from $r_B$ to the root. Let $S_B$ be an array where $S_B[k]=1$ if the $(k+1)$-th node on the path from the root to $r_B$ is marked, and $S_B[k]=\infty$ otherwise. If we precompute the $(\min,+)$-convolution between $S$ and $S_B$, then $(S\star S_B)[i]$ is what we want. In other words, for each index $i=x+1,\dots,x+q$, we are interested in a specific entry $(S\star S_B)[i]$ for some path $B\in \mathcal{B}$. Equivalently, for each path $B\in \mathcal{B}$, we are only interested in $q_B$ entries in the output array $S\star S_B$,
where $\sum_{B\in \mathcal{B}}q_B=O(q)$ and
$|\mathcal{B}|\leq \frac{m}{q\cdot \lambda_q}$. By using the output-sensitive bound from Lemma \ref{lemma:1},
we can perform the $(\min,+)$-convolution in $\OO(q\sqrt{q_B})$ time for a path $B$.

We remark that in the original algorithm, $S[k]$ is a Boolean value denote whether the prefix $s[1..k]$ of $s$ can be split into words in $D$, and $S_B[k]$ is also a Boolean value. So instead of $(\min,+)$-convolution, the Boolean convolution between $S$ and $S_B$ is needed.

Finally to compute the whole array $S[1,\dots,n]$, we iterate over $x=0,\dots,n-1$. For any $x$, we enumerate all powers of two $q$ dividing $x$, and use $S[x-2q+2,\dots,x]$ to update $S[x+1,\dots,x+q]$, via the new ``jump-query'' algorithm as explained earlier. An entry $S[k]$ will be computed correctly before we use it to update subsequent entries, so the correctness follows.

\paragraph{Running time analysis.} We first analyze the running time for a jump-query with parameter $q$. By the Cauchy--Schwarz inequality, the sum of the cost over all paths $B\in\mathcal{B}$ is
\[ \tilde{O}\left(\sum_{B\in \mathcal{B}}q\sqrt{q_B}\right)\,=\,
\tilde{O}\left(
q\sqrt{q|\cal B|}
\right)\,=\,
\tilde{O}\left(q\sqrt{\frac{m}{\lambda_q}}\right).
\]
The other parts of the query algorithm require
$\OO(q\cdot\lambda_q)$ time.
The total time is
\[ \tilde{O}\left(q \lambda_q \,+\, q\sqrt{\frac{m}{\lambda_q}}\right).
\]
To balance cost, we choose $\lambda_q=m^{1/3}$, and as a result, the query time is $\tilde{O}(qm^{1/3})$.

To compute the whole array $S[1,\dots,n]$, we need to perform $O(\frac{n}{q})$ jump-queries that have parameter $q$. So the final running time is
%
$$\tilde{O}\left(\sum_{q=2^\ell,~\ell\leq \log n}\frac{n}{q}\cdot qm^{1/3}\right)
\,=\, \OO(nm^{1/3})$$
plus $\OO(n+m)$ for preprocessing, which luckily gives the same result as Bringmann et al.'s previous algorithm. In fact, the previous algorithm requires two methods with different running time ($O(q^2)$ and $\OO(\sqrt{qm})$) to solve the ``jump-queries'', so our new algorithm appears to be simpler.

}

\begin{theorem}\label{lemma:12}
The minimum word break problem can be solved in $\tilde{O}(n m^{1/3} + m)$ time.
\end{theorem}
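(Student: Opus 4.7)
The plan is to adapt Bringmann, Gr\o nlund, and Larsen's framework for the decision version of word break, replacing Boolean convolutions with $(\min,+)$-convolutions whose second operand is binary (entries in $\{1,\infty\}$), so that Lemma~\ref{lemma:1} can be applied. First I would argue the reduction from the minimum word break problem to a collection of ``min-jump query'' instances (Problem~\ref{prob:jump2}): generate the array $S[1,\ldots,n]$ of minimum break counts from left to right, and at each index $x$ launch one call of Problem~\ref{prob:jump2} for every scale $q$ (a power of $2$ at most $n$) such that $q \mid x$, taking the entry-wise minimum of the returned subarrays. Each $S[i]$ is finalized before it is read, and each position participates in $O(\log n)$ calls.

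Next I would solve a single min-jump query at scale $q$ using essentially Bringmann et al.'s trie construction. Build the trie $\mathcal T_q$ on the reverses of $D_q$, mark the nodes spelling words, and greedily cut out a maximal node-disjoint family $\mathcal B$ of downward paths each hitting exactly $\lambda_q$ marked nodes, with $\lambda_q$ to be chosen. Standard arguments give $|\mathcal B|\le m/(q\lambda_q)$. For each output index $i$, find via suffix trees the deepest trie node $v$ spelling a prefix of $s[1..i]^{\mathrm{rev}}$, walk up through marked ancestors of $v$ (updating $S'[i]$ from the corresponding $S[j-1]+1$ values in $O(1)$ per step), and stop as soon as the top node $r_B$ of some path $B\in\mathcal B$ is reached or the root is hit. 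Maximality of $\mathcal B$ caps this walk at $O(\lambda_q)$ steps, costing $O(q\lambda_q)$ over all $i$ in the window.

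The residual task per path $B$ is to minimize $S[j-1]+1$ over marked nodes on the segment from $r_B$ to the root, which becomes reading a single specified entry of the $(\min,+)$-convolution $S\star S_B$, where $S_B$ is the $\{1,\infty\}$-valued marker array of $B$. Since the various paths $B$ together need at most $q$ such entries across the window, the output-sensitive part of Lemma~\ref{lemma:1} gives $\tilde O(q\sqrt{q_B})$ time for the $q_B$ entries requested from $B$. Summing via Cauchy--Schwarz,
\[
\sum_{B\in\mathcal B} q\sqrt{q_B}
\;\le\; q\sqrt{|\mathcal B|\cdot \sum_B q_B}
\;\le\; \tilde O\!\left(q\sqrt{m/\lambda_q}\right).
\]
Balancing against the $O(q\lambda_q)$ walking cost at $\lambda_q=m^{1/3}$ yields $\tilde O(q m^{1/3})$ per query.

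Finally I would sum over the $O(n/q)$ queries at each scale $q$ and over the $O(\log n)$ scales, obtaining
\[
\tilde O\!\left(\sum_q \tfrac{n}{q}\cdot q m^{1/3}\right) \;=\; \tilde O(n m^{1/3}),
\]
plus $\tilde O(n+m)$ for building suffix trees, tries, and path decompositions. The main obstacle I anticipate is making sure the output-sensitive form of Lemma~\ref{lemma:1} really can be invoked once per path with all $q_B$ desired entries bundled together (rather than per entry, which would lose the Cauchy--Schwarz savings); this should follow from the lemma's statement, but warrants an explicit check that the indices we request can be fixed up front before the convolution is run.
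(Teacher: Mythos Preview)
Your proposal is correct and follows essentially the same approach as the paper: both adapt Bringmann--Gr\o nlund--Larsen's framework by replacing the Boolean convolutions $S\circ S_B$ with binary $(\min,+)$-convolutions $S\star S_B$, invoke the output-sensitive part of Lemma~\ref{lemma:1} once per path $B$ on the $q_B$ requested entries, apply Cauchy--Schwarz to bound $\sum_B q\sqrt{q_B}\le q\sqrt{q|\mathcal B|}$, and balance at $\lambda_q=m^{1/3}$. Your concern about bundling the $q_B$ entries per path is legitimate but easily resolved: for a fixed window $[x+1,\ldots,x+q]$, first perform the $O(\lambda_q)$-step walk for every $i$ to determine which path $B$ (if any) it lands on, and only then call Lemma~\ref{lemma:1} once per path with all its target indices known in advance.
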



\paragraph{Remark.}
Note that the algorithm actually solves an extension of the problem: compute the minimum number of breaks for every prefix of the input string.
In particular, when the alphabet is unary, this implies
an
 $\tilde{O}(t\sigma^{1/3}+\sigma)$-time
 algorithm for
the all-targets change-making problem.  However, this bound is not as good as those from Theorem~\ref{thm:1} and Corollary~\ref{cor:2}  ($\tilde{O}( \min\{t^{4/3},\,(t\sigma)^{2/3}+t\})$).

\section{Concluding Remarks}

Our change-making algorithms can be modified to compute not just the minimum number of coins but also a representation of the minimum multiset of coins for every target value. 
For the FFT-based algorithms, we need standard techniques for witness finding~\cite{AlonGMN92,Seidel92} (which only increases the running time by polylogarithmic factors).

Although Erd\H{o}s and Graham's $\Theta(u^2/k)$ bound on the Frobenius problem is asymptotically tight in the worst case
(one bad coin set is $\{x,2x,\ldots,(k-1)x,(k-1)x-1\}$ with $x=\lceil\frac{u}{k-1}\rceil$), the Frobenius number tends to be smaller
for ``many'' $k$-tuples of coin values (it is usually subquadratic even for $k=3$).  This suggests that
our $\OO(u^2+t)$-time algorithm for all-targets coin changing might be improvable for many input sets of coins.
However, obtaining an improvement in the worst case remains intriguingly open (this might require new results on the Frobenius problem---the interplay between combinatorial and algorithmic results seems worthy of further study).



\paragraph{Acknowledgement.} We thank Adam Polak and Chao Xu for discussion and, in particular, for bringing the minimum word break problem to our attention.  We also thank the anonymous reviewers for their helpful comments.

{\small
\bibliographystyle{plain}
\bibliography{references}
}

\appendix

\end{document}